\pgfplotsset{compat=1.7}
\crefname{section}{\S}{\S\S}
\Crefname{section}{\S}{\S\S}
\newcommand {\cH } {{\mathcal{H}}}
\newcommand {\ass }{{\bf assert}}
\newcommand {\supp } {{\rm supp}}
\newcommand {\Tr} {{\mathrm{Tr}}}
\newcommand {\tr} {{\mathrm{Tr}}}
\newcommand{\op}[2]{|#1\rangle \langle #2|}
\newcommand{\nm}[1]{\lVert #1\rVert}
\newcommand{\ZZ}{\mathbb{Z}}
\newcommand{\FF}{\mathbb{F}}
\newtheorem{thm}{Theorem}[section]
\newtheorem{lem}{Lemma}[section]
\newtheorem{exam}{Example}[section]
\newcommand{\twr}[1]{{\color{magenta}{T: #1}}}
\newcommand{\twrchanged}[1]{{\color{cyan}{#1}}}
\begin{document}
\title[Scalable Equivalence Checking and Verification of Shallow Quantum Circuits]{Scalable Equivalence Checking and Verification \\ of Shallow Quantum Circuits}
\author{Nengkun Yu}
\orcid{0000-0003-1188-3032}             
\affiliation{             
  \institution{Stony Brook University}            
  \country{USA}                    
}

\author{Xuan Du Trinh}         
\affiliation{             
  \institution{Stony Brook University}            
  \country{USA}                    
}

\author{Thomas Reps}
\orcid{0000-0002-5676-9949}             
\affiliation{             
  \institution{University of Wisconsin–Madison}            
  \country{USA}                    
}

\begin{abstract}

This paper concerns the problem of checking if two shallow (i.e., constant-depth) quantum circuits perform equivalent computations.
Equivalence checking is a fundamental correctness question---needed, e.g., for ensuring that transformations applied to a quantum circuit do not alter its behavior.
For quantum circuits, the problem is challenging because a straightforward representation on a classical computer of each circuit's quantum state can require time and space that are exponential in the number of qubits $n$.

The paper presents decision procedures for two variants of the equivalence-checking problem.
Both can be carried out on a classical computer in time and space that, for any fixed depth, is linear
in $n$.
Our critical insight is that local projections are precise enough to completely characterize the output state of a shallow quantum circuit.
Instead of explicitly computing the output state of a circuit, we generate a set of local projections that serve as constraints on the output state.
Moreover, the circuit's output state is the unique quantum state that satisfies all the constraints.

Beyond equivalence checking, we show how to use the constraint representation to check a class of assertions, both statically and at run time.
Our assertion-checking methods are sound and complete for assertions expressed as conjunctions of local projections.

Our experiments show that
on a server equipped with 2× Intel\textsuperscript{\textregistered} Xeon\textsuperscript{\textregistered} Gold 6338 CPUs (128 threads total) and 1.0~TiB of RAM, running Ubuntu 20.04.6 LTS, the constraint representation of a random 100-qubit circuit of depth 6 can be computed in 19.8 seconds.
For fixed inputs $\ket{0}^{\otimes 100}$, equivalence checking of {random} 100-qubit circuits of depth 3 takes 4.46 seconds;
for arbitrary inputs, it takes no more than 31.96 seconds.

\keywords{Quantum Supremacy, \and Quantum Programs \and Complete \and Verification \and Equivalence\and Testing}
\end{abstract}

\maketitle

%
%
%

\section{Introduction}
\label{Se:Introduction}

To make programming quantum computers easier, researchers are developing quantum programming languages \cite{SZ00,Sa03,O03,Se04,AG05,DBLP:conf/pldi/BichselBGV20} and platforms for implementing quantum software \cite{JFD12,GLR13,WS14,Svor18,Cirq,Qiskit}.
Verifying quantum software on classical computers is both essential and challenging, given the noisy and nascent state of quantum hardware 
\cite{BHY19,ZYY19,10.1145/3498697,Unruh,Unruh2,YP21,DBLP:journals/pacmpl/Hietala0HW021,Peng2023,DBLP:journals/pacmpl/YuanMC22,DBLP:conf/pldi/TaoSYLJCCG22,Ying11,Bichsel_2023,10.1145/3527316}.
Verification techniques originally designed for software can also support validation of quantum hardware \cite{li2019proq}.
Among these advances, relational analysis for quantum programs has gained a certain amount of attention \cite{BHY19,li2019quantum,Unruh,approximate_relational_reasoning}.
One of the most important relational properties is program equivalence \cite{10.1007/3-540-09526-8-20,BERGSTRA1982113}---needed, for instance, for ensuring that transformations applied to a quantum circuit do not alter its behavior.


Equivalence checking is a fundamental concept in computer science \cite{10.1007/3-540-09526-8-20,BERGSTRA1982113}, with applications in areas such as electronic design automation, translation validation, and program optimization.
For quantum circuits---one of the most important classes of quantum programs---the equivalence problem has been the subject of several studies \cite{10.5555/2011464.2011465,10.5555/1326073.1326089,BURGHOLZER2021100051,10.1145/3508352.3549479,DBLP:conf/atva/ThanosCL23,Amy_2019,DBLP:conf/ijcar/MeiCBL24}.
\citet{10.1145/3508352.3549479} proposed two decision-diagram-based algorithms to check the equivalence of dynamic quantum circuits.
Amy \cite{Amy_2019} and \citet{DBLP:conf/atva/ThanosCL23} proposed efficient algorithms for checking equivalence in Clifford-gate quantum circuits.
\citet{DBLP:conf/ijcar/MeiCBL24} developed a precise method to check the equivalence of universal quantum circuits using weighted-model-counting techniques. 

However, it is challenging to develop an \textit{efficient} equivalence-checking technique for quantum circuits.
From one perspective, simulation is relevant:
many quantum algorithms start with a single known input (e.g., $\ket{0}^{\otimes n} = \ket{0 \ldots 0}$);
simulation computes the output state by mimicking the computation step-by-step;
and thus one way to establish whether the computation of two quantum circuits are equivalent would be by comparing the two output states.
Unfortunately, with all presently known classical simulation techniques, in the general case the time and space costs scale exponentially.\footnote{
  Classical simulation refers to a classical algorithm that computes (or in some cases, approximates) the entire output state of a circuit, or can sample from the output state according to the exact or approximate amplitudes \cite{10.1145/3618260.3649638,PhysRevX.12.021021}.
}
An alternative approach is to confine attention to a limited class of quantum circuits.
For instance, \citet{DBLP:conf/atva/ThanosCL23}, exploit the Knill-Gottesman theorem \cite{gottesman1998heisenbergrepresentationquantumcomputers,Aaronson_2004}---which enables efficient classical simulation of Clifford circuits---to develop an efficient equivalence-checking method for Clifford circuits.
Amy applied his Feynman path-integral-based circuit-verification methods to Clifford+T circuits, but their efficiency remains theoretically uncertain \cite{Amy_2019}.

In our work, while we also confine our attention to a limited class of quantum circuits $\mathcal{C}$, our goal is to find an equivalence-checking method for $\mathcal{C}$ that is (i) efficient---i.e., polynomial time and space in the number of qubits $n$---but where (ii) $\mathcal{C}$ is known to contain families of circuits for which classical simulation is intractable.
\noindent
\begin{mdframed}
  \textit{Is it feasible to check equivalence for a class of quantum circuits that contains families of circuits for which classical simulation is intractable?}
\end{mdframed}
\noindent
{\it A priori}, it is unclear whether this goal is achievable.
Conventional wisdom suggests it is not \cite{sigarch},
the intuition being something like
(i) equivalence checking is a form of verification;
(ii) verification methods are a form of state-space exploration;
(iii) state-space exploration is a form of simulation;
(iv) if equivalence checking is tractable, then classical simulation is tractable---which directly contradicts our desire to handle families of circuits for which classical simulation is intractable!

In our work, we focus on the class of \emph{constant-depth} quantum circuits, where the circuit depth $d$ is left as an unspecified constant.\footnote{
  When establishing a complexity bound, a family of quantum circuits is parameterized on the number of qubits $n$.
  Our results apply to families where the quantum-circuit depth $d$ is independent of $n$.
  They are also applicable to $n$-qubit 1D circuits of depth $O(\log n)$.
}
Because each member has fixed depth $d$, constant-depth circuits run in constant time on a quantum computer.
Constant-depth quantum circuits have been shown to outperform their classical counterparts, even in the presence of noise \cite{Bravyi_2018,Bravyi_2020}.
Moreover, these circuits are particularly well-suited for implementation in the Noisy Intermediate-Scale Quantum (NISQ) era \cite{preskill2018quantum}, because a shallow-depth circuit only requires coherence times that are achievable with present-day technology.
Very recently, \citet{schuster2025randomunitariesextremelylow} demonstrated that shallow-depth random quantum circuits can serve as a foundation for quantum cryptographic primitives.\footnote{
  This result shows that for 1D system, $O(\log n)$ depth is enough.
}
Thus, constant-depth (shallow) circuits provide a promising route for achieving quantum supremacy in the NISQ era \cite{preskill2018quantum}, as Google demonstrated with Sycamore \cite{Arute_2019, morvan2023phasetransitionrandomcircuit}.
(Quantum supremacy means that a quantum computer can perform a task that is intractable to a classical computer.) 

On the other hand, constant-depth circuits are, in general, hard to simulate classically:
\citet{terhal2004adaptivequantumcomputationconstant} provide evidence that there are quantum computations that
(i) can be performed by a constant-depth circuit using 2-qubit gates, but
(ii) cannot be accurately simulated classically.\footnote{
  \citeauthor{terhal2004adaptivequantumcomputationconstant} study depth-4 circuits;
  however, the final layer consists solely of measurements on a computational basis.
  Thus, their results indicate that it is difficult to simulate measurement-free quantum circuits of depth 3.
} 
They also demonstrate that efficient classical simulation of these circuits, up to a constant precision, would imply that the complexity class BQP is contained within AM.
\citet{ji2009nonidentitycheckremainsqmacomplete} proved
that approximate equivalence-checking for constant-depth circuits is QMA-hard,
given at last $\Omega(\log n)$ bits of precision for each gate.
(BQP and QMA are the quantum analogs of P and NP, respectively.)

\paragraph{\textbf{Our contributions}}
In light of the negative results discussed above, the prospects for being able to do something classically for constant-depth quantum circuits look rather bleak.
However, in this paper, we show that
\noindent
\begin{mdframed}
  \textit{One can check the equivalence of two constant-depth quantum circuits efficiently, in time linear in the number of qubits $n$.}
\end{mdframed}
\noindent
From the perspective of computational complexity, the intuition behind our result is that the costs of our methods exhibit \emph{fixed-parameter tractability} \cite{DBLP:journals/siamcomp/DowneyF95}, with time complexity of the form $T(n,d) = f(d) \cdot O(n)$, where $d$ is the circuit depth, and $f(d)$ is a function that captures constraints on qubit interactions imposed by the circuit’s geometry (e.g., 1D or 2D architectures).\footnote{
  A similar fixed-parameter-tractability argument explains why LTL model checking is tractable in practice.
  The complexity is $O(2^{|\varphi|} \cdot |M|)$, where $\varphi$ is the formula and $M$ is the model.
  However, most formulas of interest are small, so the exponential term $2^{|\varphi|}$ is effectively a constant for practically relevant formulas, and thus in practice an LTL model-checking problem can be solved in linear time: $O(M)$.
}
For 1D architectures, $f(d) = 2^{O(d)}$, while for 2D architectures, $f(d) = 2^{O(d^2)}$.
Thus, $T(n,d)$ is exponential in $d$, but for a fixed value of $d$, $T(n,d)$ is linear in $n$, enabling efficient scaling with problem size.

We assume no restrictions on the gate set, other than the property that each elementary gate acts on a constant number of qubits.
Throughout the rest of the paper, we use the phrase ``shallow circuit'' as a synonym for ``constant-depth circuit.''
Our examples use 1- and 2-qubit gates, but our results apply to circuits composed of 3-qubit gates (such as the Toffoli gate), 4-qubit gates, etc.

\paragraph{A constraint-based description of a circuit's output state (\Cref{Se:Overview} and \Cref{Se:AnEfficientDescription})}

The key insight underlying our result is that classical simulation is \emph{not required} to be able to compare the outputs of two shallow circuits.
Instead of using simulation, we give an algorithm that provides an exact \emph{specification} of the result computed by a shallow quantum circuit applied to the initial state $\ket{0}^{\otimes n}$.
As is common in many quantum algorithms \cite{deutsch1992rapid, deutsch1992rapid, Arute_2019, morvan2023phasetransitionrandomcircuit}, the computation defined by such a circuit produces a unique output quantum state.
The specification---or \emph{description}---of the circuit's output state is captured as a \emph{tuple of local projections}, similar to the abstract state in quantum abstract interpretation \cite{YP21}.
Essentially, such a tuple can be viewed as a conjunction of atomic constraints (i.e., the individual local projections).
The most notable features of this method are:
\begin{description}
 \item [Completeness:]
   The constraint description produced for a given constant-dept circuit specifies exactly the circuit's output state.

  \item [Efficient description size:] For an $n$-qubit constant-depth circuit, the constraint description consists of at most $n$ local projections, where each local projection depends non-trivially on only a constant number of qubits.

  \item [Efficiently computable on a classical computer:] Using a classical computer, a constraint description can be created in the time polynomial in the number of qubits $n$.
\end{description}

\noindent
These features help us address several problems in quantum program analysis, as described below.

\paragraph{Efficient equivalence checking (\Cref{Se:EquivalenceChecking})}
The constraint description of a quantum circuit's output enables us to give algorithms for two variants of quantum-circuit equivalence (for constant-depth quantum circuits): 
\begin{enumerate}[left=0pt .. 1.5\parindent]
  \item
    Do two circuits produce equivalent output states when applied to the initial state $\ket{0}^{\otimes n}$?
  \item
    Do two circuits produce equivalent output states for each possible initial state $\ket{\psi}$?
\end{enumerate}

\noindent
The first algorithm checks the equivalence of quantum algorithms whose inputs are fixed as $\ket{0}^{\otimes n}$.
When an optimizing transformation is applied to a quantum circuit, the second algorithm can be used to verify the equivalence of the original and transformed circuits.

Note that the first notion of equivalence is a weaker notion than the second.
We address the first kind of equivalence-checking problem as our initial task because (i) the initial state of most quantum algorithms is $\ket{0}^{\otimes n}$, and (ii) our approach to the second kind of equivalence-checking problem is to reduce it to a problem of the first kind.

Even for the first kind of equivalence-checking problem, it is not feasible to compare the respective local-projection tuples directly because there is no canonical form that would make equivalence-checking easy.\footnote{
  Similarly, it is not feasible to perform equivalence checking by comparing the respective circuit structures directly:
  equivalent circuits can have entirely different structures, and we do not know of a canonical form for constant-depth quantum circuits.
  Nor do we know of a complete set of rewrite rules by which a given circuit can be converted into every equivalent circuit.
}
For this variant of equivalence checking, we leverage the reversibility of quantum circuits to transform the problem into an identity-checking problem.
For the second equivalence-checking problem, we use Choi states \cite{CHOI1975285} to reduce the second kind of problem to the first kind.

\paragraph{Applications to assertion checking (\Cref{Se:AssertionChecking})}
Beyond equivalence checking, we
show how the constraint description of the output of a quantum circuit enables the verification of a class of assertions both statically and at runtime.
In both cases, the assertion language $\mathcal{L}$ consists of conjunctions of local projections, and our assertion-checking methods are sound and complete for $\mathcal{L}$.

For static assertion checking, we show that 
\begin{mdframed}
  \textit{One can efficiently check an $\mathcal{L}$-assertion for any constant-depth quantum circuit.}
\end{mdframed}

Any approach to runtime assertion checking for quantum computing faces three issues:
\begin{enumerate}
  \item
    Can each assertion be expressed efficiently?
  \item
    Can each assertion be implemented efficiently on a quantum computer?
  \item
    Can each assertion be checked without affecting the state of the computation if the assertion is satisfied?
\end{enumerate}

Assertions based on local projections provide an affirmative answer to all three questions:
\begin{description}
  \item [Expressiveness:]
    A conjunction of $n$ local projections can describe exactly any intermediate state of the computation of a constant-depth quantum circuit.\footnote{
      We do not suggest that conjunctions of local projections are intuitive for human users.
      Whether a ``human-friendly'' assertion language can be compiled into local projections as an intermediate language remains for future research.
    }

  \item [Efficiently implementable:]
    Assertion checking can be implemented efficiently as a sequence of projective measurements.
    Each projection is local, acting on a constant number of qubits, making them easy to implement.
 The local projections commute, so performing the corresponding measurements in different orders produces the same binary measurement.

  \item [Non-intrusiveness:]
    The assertion language of local projections inherits the advantageous property of the language of general projective assertions \cite{li2019proq}, namely, the presence of an assertion check does not affect the state of a computation if the assertion is satisfied.
\end{description}

\paragraph{Organization of the paper}
\Cref{Se:BackgroundOnQuantumComputing} summarizes the basic concepts of quantum computing.
\Cref{Se:Overview} discusses an example to show how we create a constraint-based description of the output state of a constant-depth quantum circuit.
\Cref{Se:AdditionalTerminologyAndNotation} explains some additional technical concepts needed in the remainder of the paper.
\Cref{Se:AnEfficientDescription} gives our algorithm for computing an 
efficient constraint-based description of a circuit's output state.
\Cref{Se:EquivalenceChecking} presents our method for efficient equivalence checking.
\Cref{Se:AssertionChecking} presents techniques for checking local-projection assertions, both statically and at runtime.
\Cref{sec:Experiments} describes experiments with an implementation of the equivalence-checking technique.
\Cref{Se:RelatedWork} discusses related work.
\Cref{sec:Conclusion} concludes.
{We defer some proofs to \Cref{sec:ProofOfTheorem}, and give a comprehensive comparison with quantum abstract interpretation \cite{YP21} in \Cref{Se:ComparisonWithQuantumAbstractInterpretation}.}
\Cref{sec:ProofOfTheorem} and \Cref{Se:ComparisonWithQuantumAbstractInterpretation} are submitted as Supplementary Material.

\section{Background on Quantum Computing}
\label{Se:BackgroundOnQuantumComputing}

This section presents background about, and notation used in, quantum information and quantum computation, primarily following the textbook by Nielsen and Chuang~\citeN{NI11}.

\paragraph{Notation}

We use the notation
$[n]=\{1,2,\cdots,n\}$
and ``$\setminus$'' to denote set difference.
The cardinality of a set $s$ is denoted by $|s|$.
We focus on finite-dimensional vector spaces $\mathbb{C}^d$ of complex vectors.
Linear \emph{operators} are linear mappings between these vector spaces, represented by $d \times d$ matrices, denoted by $\mathbb{C}^{d \times d}$.
The identity matrix is denoted by $I$.
The Hermitian conjugate of an operator $A$ is $A^\dag = (A^T)^*$, where $A^T$ is the transpose of $A$, and $B^*$ is the complex conjugate of $B$.
An operator $A$ is \emph{Hermitian} if $A = A^\dag$.
A Hermitian operator $A$ is positive semi-definite if it has non-negative eigenvalues.
The trace of a matrix $A$ is the sum of its diagonal entries, $\tr(A) = \sum_i A_{ii}$. 

We assume familiarity with linear-algebra concepts, such as tensor products, orthonormal bases, inner products, outer products, and Hilbert spaces. We use Dirac notation, $\ket{\psi}$, to denote a complex column vector in $\mathbb{C}^d$. The inner product of vectors $\ket{\psi}$ and $\ket{\phi}$ is $\langle\psi|\phi\rangle \in \mathbb{C}$, which is the matrix product of $\bra{\psi}$, the Hermitian conjugate of $\ket{\psi}$, and $\ket{\phi}$. The outer product of vectors $\ket{\psi}$ and $\ket{\phi}$ is $\op{\psi}{\phi} \in \mathbb{C}^{d \times d}$, the matrix product of $\ket{\psi}$ and $\bra{\phi}$, the Hermitian conjugate of $\ket{\phi}$. The Euclidean norm of a vector $\ket{\psi}$ is $\nm{\ket{\psi}} = \sqrt{\langle\psi|\psi\rangle}$.

\paragraph{Quantum States}

Quantum states are the fundamental descriptors of quantum systems, providing a complete and probabilistic description of a system's properties.
A (pure) quantum state is represented by a vector \(|\psi\rangle\) in a Hilbert space, which can be expressed as a superposition of basis states, encapsulating the principle of superposition inherent in quantum mechanics. 
When dealing with multi-qubit systems, the overall quantum state resides in a higher-dimensional Hilbert space formed by the tensor product of the individual qubit spaces.
The state space grows exponentially in the number of qubits.




\paragraph{Unitary Operations}

In quantum mechanics, unitary operators are fundamental transformations that play a crucial role in preserving the quantum-mechanical properties of systems.
A unitary operator \(U\) is represented by a matrix that satisfies the condition \(U^\dagger U = U U^\dagger = I\), where \(U^\dagger\) denotes the Hermitian adjoint (conjugate transpose) of \(U\), and \(I\) is the identity matrix.
This condition ensures that unitary transformations are reversible and conserve the norm of quantum states, thereby preserving probabilities. We call one or two-qubit unitary matrices
``quantum gates.''

\paragraph{Quantum Circuits}

Quantum circuits are a formalism for expressing quantum algorithms.
In a quantum circuit, quantum gates are organized in layers, with each layer containing gates that act on different qubits simultaneously. 
To keep things simple, and to reduce notational clutter, we formulate the semantics of a quantum circuit for the case of 2-qubit gates.
However, our results apply to circuits composed of gates that act on up to $m$ qubits, where $m$ is a constant.

The semantics of each layer can be expressed as a tensor product of 2-qubit unitary matrices.

Let \( U_{ij}^{(k)} \) denote the 2-qubit unitary acting on qubits \( i \) and \( j \) in circuit layer \( k \).
The general evolution of the quantum state after applying the \( L \) layers of such operations can be expressed as
\begin{equation}
\label{Eq:QuantumCircuitSemantics}
  |\psi_{\text{final}}\rangle = \left( \prod_{k=1}^{L} \bigotimes_{(i,j) \in \text{pairs}[k]} U_{ij}^{(k)} \right) |\psi_{\text{initial}}\rangle,
\end{equation}
where $(i,j) \in \text{pairs}[k]$ denotes the set of pairs of qubits acted upon by a unitary \( U_{ij}^{(k)} \) in layer \( k \), and $|\psi_{\text{initial}}\rangle$ and $|\psi_{\text{final}}\rangle$ represent the initial and final quantum states, respectively.
One often has $|\psi_{\text{initial}}\rangle := |0^{\otimes n}\rangle$.
Each unitary \( U_{ij}^{(k)} \) is a \( 4 \times 4 \) unitary matrix representing the operation on qubits \( i \) and \( j \) in layer \( k \).
Here, we are padding 1-qubit \( 2 \times 2 \) unitary matrices to 2-qubit \( 4 \times 4 \) unitary matrices.
The indexed tensor product $\bigotimes_{(i,j) \in \text{pairs}[k]} U_{ij}^{(k)}$ in \Cref{Eq:QuantumCircuitSemantics} denotes the semantics of layer $k$ of the circuit.
The indexed product $\left( \prod_{k=1}^{L} \cdot \right)$ denotes the composition of the $L$ layers.
The depth of a quantum circuit refers to the number of layers $L$ that it contains, representing the application of gates over $L$ time steps.

\section{Overview}%
\label{Se:Overview}

\begin{wrapfigure}{R}{7.0cm}
    \vspace{-2.0ex}
    \centering
    \begin{adjustbox}{width=0.5\textwidth}
    \begin{quantikz}[column sep={0.5cm}, row sep={0.2cm}]
        \lstick{$q_1\ket{0}$} & \gate[style={fill=blue!20}, wires=2][1cm]{U_{1,2}^{(1)}} & \qw & \gate[style={fill=blue!40}, wires=2][1cm]{U_{1,2}^{(3)}} & \qw & \qw \\
        \lstick{$q_2\ket{0}$} & \qw & \gate[style={fill=blue!30}, wires=2][1cm]{U_{2,3}^{(2)}} & \qw &\qw & \qw \\ 
        \lstick{$q_3\ket{0}$} & \gate[style={fill=blue!20}, wires=2][1cm]{U_{3,4}^{(1)}} & \qw & \gate[style={fill=blue!40}, wires=2][1cm]{U_{3,4}^{(3)}} & \qw & \qw \\
        \lstick{$q_4\ket{0}$} & \qw & \gate[style={fill=blue!30}, wires=2][1cm]{U_{4,5}^{(2)}} & \qw & \qw& \qw \\
        \lstick{$q_5\ket{0}$} & \gate[style={fill=blue!20}, wires=2][1cm]{U_{5,6}^{(1)}} & \qw & \gate[style={fill=blue!40}, wires=2][1cm]{U_{5,6}^{(3)}} & \qw & \qw \\
        \lstick{$q_6\ket{0}$} & \qw & \gate[style={fill=blue!30}, wires=2][1cm]{U_{6,7}^{(2)}} & \qw & \qw & \qw \\
        \lstick{$q_7\ket{0}$} & \gate[style={fill=blue!20}, wires=2][1cm]{U_{7,8}^{(1)}} & \qw & \gate[style={fill=blue!40}, wires=2][1cm]{U_{7,8}^{(3)}} & \qw & \qw \\
        \lstick{$q_8\ket{0}$} & \qw & \qw & \qw & \qw & \qw 
    \end{quantikz}
    \end{adjustbox}
    \caption{The example: each layer consists of two-qubit unitaries applied to disjoint qubits.}
    \label{fig:circuit_1}
    \vspace{-3.0ex}
\end{wrapfigure}
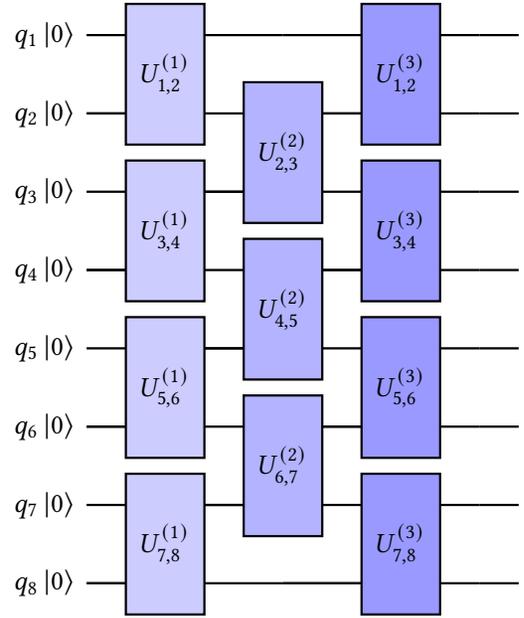

This section presents an example to show how we create a constraint-based description of the output state of a constant-depth quantum circuit, as a tuple of local projections.
This representation is what unlocks our results on equivalence checking (\Cref{Se:EquivalenceChecking}) and assertion checking (\Cref{Se:AssertionChecking}). 
More precisely, the obtained constraints are local, meaning that each constraint acts nontrivially on only a constant number of qubits.
This locality enables efficient equivalence checking and assertion verification, making both tasks computationally feasible.


\Cref{fig:circuit_1} shows an example a depth-3 quantum circuit. Each of the eight horizontal rows in the diagram depicts a qubit.
Time progresses from left to right, with the initial value of each qubit being \( \ket{0} \), as shown on the left.
The qubit \( \ket{0} \) is written in Dirac notation and can also be expressed as the column vector
$\begin{pmatrix}
1 \ \ 
0
\end{pmatrix}^t$. In quantum computing, the analogs of logic gates are quantum \emph{gates}, which are mathematically represented by matrices. Consequently, we will use the two terms interchangeably.

\paragraph{Infeasibility of classical simulation-based verification}
Because this circuit could contain non-Clifford unitary matrices, the standard classical simulation algorithm, based on the famous Gottesman-Knill Theorem \cite{gottesman1998heisenbergrepresentationquantumcomputers,Aaronson_2004}, is not applicable here.

We will walk through the example, state the challenge of the paper
and illustrate the main idea of our method.
Along the way, we will recall key concepts of quantum computing.

The initial state is $\ket{0^{\otimes 8}}$.
The unitary matrices of the three layers---working left to right---are
\[
U^{(1)} = U_{1,2}^{(1)}\otimes {U_{3,4}^{(1)}}  \otimes{U_{5,6}^{(1)}} \otimes {U_{7,8}^{(1)}}  \quad
U^{(2)} = I_1\otimes U_{2,3}^{(2)}\otimes {U_{4,5}^{(2)}}  \otimes{U_{6,7}^{(2)}} \otimes  I_{8} \quad
U^{(3)} = U_{1,2}^{(3)}\otimes {U_{3,4}^{(3)}}  \otimes{U_{5,6}^{(3)}} \otimes {U_{7,8}^{(3)}} 
\]
with $I_1$ and $I_{8}$ being the identity matrices on qubit $q_1$ and $q_{8}$, respectively.
For brevity, we may occasionally omit such identity matrices in our presentation.
The unitaries \( U_{i,j}^{(k)} \) are general \( 4 \times 4 \) unitary matrices with no additional restrictions imposed on them.

The output state of this circuit is $U^{(3)}U^{(2)}U^{(1)}\ket{0^{\otimes 8}}$, where juxtaposition denotes matrix multiplication (or, a special case, matrix-vector multiplication).

Before looking at the output state of this circuit, consider the initial state $\ket{\Phi_0}:=\ket{0^{\otimes 8}}$.
On the one hand, it can be represented as a $2^{8}$-dimensional complex vector.
However, we can have a more compact representation:
the first qubit is in state $\ket{0}$; 
the second qubit is in state $\ket{0}$;
$\cdots$;
the $8^{\textit{th}}$ qubit is in state $\ket{0}$.
In other words, this state can be represented as a tuple, a conjunction,
\[
\begin{array}{l}(\ket{0},\ket{0},\ket{0},\ket{0},\ket{0},\ket{0},\ket{0},\ket{0})
\end{array}
\]
where the $i^{\textit{th}}$ entry denotes the state of the $i^{\textit{th}}$ qubit.

Now consider the result of the first step:
after applying the first layer of unitaries, we have
\[
  \ket{\Phi_1}:=U^{(1)}\ket{0^{\otimes 8}}=U_{1,2}^{(1)}\otimes {U_{3,4}^{(1)}}  \otimes{U_{5,6}^{(1)}} \otimes {U_{7,8}^{(1)}} \ket{0^{\otimes 8}}.
\]
Unfortunately, for general choices of unitaries, this state is no longer a tensor product of one-qubit states. In other words, the action of $U^{(1)}$ can generate quantum entanglement. However, it is still in a tensor-product form,
\begin{align*}
  U^{(1)}\ket{0^{\otimes 8}}=U_{1,2}^{(1)}\ket{00}\otimes {U_{3,4}^{(1)}}  \ket{00}\otimes{U_{5,6}^{(1)}} \ket{00}\otimes {U_{7,8}^{(1)}}\ket{00}.
\end{align*}
It is a tensor product of two-qubit states instead of one-qubit states.
In other words, we can write it as the following tuple:
\[
(U_{1,2}^{(1)}\ket{00}, {U_{3,4}^{(1)}}  \ket{00}, {U_{5,6}^{(1)}} \ket{00},  {U_{7,8}^{(1)}}\ket{00})
\]
where each entry describes a two-qubit state.

What have we lost so far?
We have lost our standard of tracking individual qubits.
Before executing the first layer, we can track each qubit individually. 
However, after applying \( U^{(1)} \), tracking single qubits results in a loss of information and precision.
Consequently, we now track pairs of qubits instead.
Despite this change, our description remains precise, and we have not lost any information about the program thus far.

The next layer of unitaries is $U^{(2)}$, which, unfortunately, destroys all tensor-product structures and can entangle each qubit with every other qubit.
\begin{align*}
\ket{\Phi_2}:=U^{(2)}\ket{\Phi_1}=U_{2,3}^{(2)}\otimes {U_{4,5}^{(2)}}  \otimes{U_{6,7}^{(2)}} \ket{\Phi_1}.
\end{align*}
This situation undermines our efforts to represent a multi-qubit state by using only tuples of the states of small numbers of qubits. It seems like the only possibility is to represent the whole state as a giant vector.

To overcome this obstacle, we will provide an alternative interpretation of what we have obtained for $\ket{\Phi_0}$ and $\ket{\Phi_1}$.
For example, consider the representation
of $\ket{\Phi_1}$:
\[
(U_{1,2}^{(1)}\ket{00}, {U_{3,4}^{(1)}}  \ket{00}, {U_{5,6}^{(1)}} \ket{00},  {U_{7,8}^{(1)}}\ket{00}).
\]
Here, each entry, e.g., $U_{1,2}^{(1)}\ket{00}$, is the joint state on qubit $1$ and qubit $2$. We can also consider it as a property of qubit $1$ and qubit $2$. One can always regard $U_{1,2}^{(1)}\ket{00}$, and other entries as well, as a linear subspace that is spanned by the vector $U_{1,2}^{(1)}\ket{00}$. Then, instead of interpreting it as

\begin{mdframed}
  The state of qubits 1 and 2 is exactly $U_{1,2}^{(1)}\ket{00}$.
\end{mdframed}

\noindent
we regard it as 

\begin{mdframed}
  The state of qubits 1 and 2
  lies in the linear subspace
  spanned by $U_{1,2}^{(1)}\ket{00}$.
\end{mdframed}

This interpretation has a long history:
ever since the inception of quantum mechanics, researchers have utilized subspaces to describe fundamental phenomena. Notably, the seminal paper \emph{The Logic of Quantum Mechanics} \cite{BvN36} employed linear subspaces as the atomic propositions of quantum logic. This approach aligns well with the framework of quantum logic, as demonstrated in that paper.

There is a subtlety here to extend it to general quantum states because qubit 1 and qubit 2 could be entangled with the rest of the qubits, i.e., the state is not in the following form: $\ket{\psi_{1,2}} \otimes \ket{\psi_{3,\cdots,8}}$.
Fortunately, this situation will not stop us from talking about the state of qubit 1 and qubit 2.
There is a concept in quantum information called the \emph{reduced density matrix} that allows us to understand subsystems within a larger quantum system.
The reduced density matrix can always be written as a positive semi-definite  matrix;
see \Cref{Se:ReducedDensityMatrices} for more details.
(Its classical counterpart is the marginal distribution of a probability distribution.)

We can now define the notion of a state satisfying an assertion, where the state is represented by a reduced density matrix and the assertion is a linear subspace.
In our example, all of the column vectors of reduced density matrix $\rho_{1,2} = U_{1,2}^{(1)}\ket{00}\bra{00}(U_{1,2}^{(1)})^{\dag}$ lie in the subspace $S_{1,2}$ of qubits $1,2$.
Operationally, we can write it as $P_{1,2}\rho_{1,2} = \rho_{1,2}$, where $P_{1,2}$ denotes the operator that projects every vector to the subspace $S_{1,2}$.
Alternatively, abusing notation to not distinguish between operator $P_{1,2}$ and subspace $S_{1,2}$, we have
\begin{equation}
  \label{Eq:SupportConstraint}
  \supp(\rho_{1,2})\subseteq P_{1,2},
\end{equation}
where $\supp(\rho)$ denotes the subspace spanned by the column vectors.
\Cref{Eq:SupportConstraint} is equivalent to 
\begin{align*}
  \ket{\Phi_1}\in P_{1,2}\otimes I_{3,4,5,6,7,8}.
\end{align*}

\noindent
The above interpretation does not lose any information about the effect of applying unitary $U^{(1)}$ to $\ket{\Phi_0} = \ket{0^{\otimes 8}}$ because the result is a single quantum state, and thus the subspace of interest is one-dimensional. 
More precisely, we know that
\[
  \{\lambda\ket{\Phi_1}: \lambda\in \mathbb{C}\}
    = P_{1,2}\otimes I_{3,4,5,6,7,8} \cap P_{3,4}\otimes I_{1,2,5,6,7,8} \cap P_{5,6}\otimes I_{1,2,3,4,7} 
           \cap P_{7,8}\otimes I_{1,2,3,4,5,6}.
\]
In other words, the intersection precisely characterizes the output state $\ket{\Phi_1}$ up to a global scalar. The scalar is usually not important because we only care about normalized vectors.

For simplicity, we can omit the identity matrices in the intersection when there is no misunderstanding. 
Now we have obtained a representation of the output state $\ket{\Phi_1}$:
\[
  (P_{1,2}^{(1)}, {P_{3,4}^{(1)}}, {P_{5,6}^{(1)}}, {P_{7,8}^{(1)}}).
\]
Unlike the previous analysis,
each projection represents
a constraint on the output state, and the tuple represents their conjunction.
These constraints uniquely identify the state $\ket{\Phi_1}$.

Next up is to apply the unitary $U^{(2)}$ to $\ket{\Phi_1}$ to obtain $U_{2,3}^{(2)}\otimes {U_{4,5}^{(2)}}  \otimes{U_{6,7}^{(2)}}\ket{\Phi_1}$.
Instead of obtaining a precise representation of the reduced density matrices, we try to get a new class of efficient constraints to identify the output state uniquely.

The conjunctive structure allows us to treat each projection individually. For instance,
\[
\begin{array}{@{\hspace{0ex}}r@{\hspace{0.75ex}}l@{\hspace{0ex}}}
                      & \ket{\Phi_1}\in P_{1,2}\otimes I_{3,4,5,6,7,8}\\ 
  \Longleftrightarrow & \, U^{(2)}\ket{\Phi_1} \in U^{(2)} [P_{1,2}\otimes I_{3,4,5,6,7,8}]{U^{(2)}}^{\dag}\\
  \Longleftrightarrow & \, U^{(2)}\ket{\Phi_1} \in (U_{2,3}^{(2)}\otimes {U_{4,5}^{(2)}}  \otimes{U_{6,7}^{(2)}}) [P_{1,2}\otimes I_{3,4,5,6,7,8}]({U_{2,3}^{(2)}\otimes {U_{4,5}^{(2)}}  \otimes{U_{6,7}^{(2)}} })^{\dag}\\
  \Longleftrightarrow &\ket{\Phi_2} \in U_{2,3}^{(2)} [P_{1,2}\otimes I_{3}] {U_{2,3}^{(2)}}^{\dag}
\end{array}
\]
Here we use the fact that for a unitary matrix $U$, $U I U^{\dag}=I$, and we regard $U_{2,3}^{(2)}$ as $I_1\otimes U_{2,3}^{(2)}$, a matrix on qubits 1,2,3. 
We define the projection $P_{1,2,3}^{(2)}$ by $P_{1,2,3}^{(2)} := U_{2,3}^{(2)} [P_{1,2}\otimes I_{3}] {U_{2,3}^{(2)}}^{\dag}$.
This projection applies nontrivially only to qubits 1, 2, and 3. 

Similarly, we can compute four other projections, and together, one has the tuple
\[
(P_{1,2,3}^{(2)},P_{2,3,4,5}^{(2)},P_{4,5,6,7}^{(2)},P_{6,7,8}^{(2)}).
\]
Here, $P_{2,3,4,5}^{(2)}$ is obtained by applying $U^{(2)}$ to $P_{3,4}^{(1)}$ and we have to work with $U_{2,3}^{(2)}\otimes {U_{4,5}^{(2)}}$ because $U_{2,3}^{(2)}$ and ${U_{4,5}^{(2)}}$ each have a non-trivial overlap with qubit 3 or 4 of $P_{3,4}^{(1)}$.
Other terms are similar.

According to our construction, $\ket{\Phi_2}$ satisfies all of these constraints, and thus satisfies their conjunction.
Now the question is, how tight is the conjunction of these constraints?
In particular, do they uniquely identify $\ket{\Phi_2}$?

The answer is ``yes,'' which is established by the following argument:
\[
\begin{array}{@{\hspace{0ex}}r@{\hspace{0.75ex}}c@{\hspace{0.75ex}}l@{\hspace{0ex}}}
  \{\lambda\ket{\Phi_1}: \lambda\in \mathbb{C}\} & = & P_{1,2}\cap P_{3,4}\cap P_{5,6} \cap P_{7,8}\\
  \Longleftrightarrow \{\lambda U^{(2)}\ket{\Phi_1}: \lambda\in \mathbb{C}\} & = & U^{(2)}[P_{1,2}\cap P_{3,4}\cap P_{5,6} \cap P_{7,8}]{U^{(2)}}^{\dag}\\
           & = & U^{(2)}P_{1,2}{U^{(2)}}^{\dag}\cap U^{(2)}P_{3,4}{U^{(2)}}^{\dag}\cap U^{(2)}P_{5,6}{U^{(2)}}^{\dag} \cap U^{(2)} P_{7,8}{U^{(2)}}^{\dag}\\
  \Longleftrightarrow  \{\lambda \ket{\Phi_2}: \lambda\in \mathbb{C}\} & = & P_{1,2,3}^{(2)}\cap P_{2,3,4,5}^{(2)} \cap P_{4,5,6,7}^{(2)}\cap P_{6,7,8}^{(2)}
\end{array}
\]
where we have omitted the identity matrices in the computation for the sake of clarity.
 
Let us now count the resources for computing this representation.
From the presentation $(P_{1,2}^{(1)}, {P_{3,4}^{(1)}}, {P_{5,6}^{(1)}}, {P_{7,8}^{(1)}} )$, we need to apply the unitary to each entry to obtain a new constraint.
On $P_{1,2}^{(1)}$ and ${P_{7,8}^{(1)}}$, this process involves two steps of $8\times 8$ matrix multiplication;
the other two---${P_{3,4}^{(1)}}$, and ${P_{5,6}^{(1)}}$---each need two steps of $16\times 16$ matrix multiplication.
The newly obtained constraints can be stored efficiently in a classical computer:
they consist of two matrices of size $8\times 8$ and two matrices of size $16\times 16$.
In summary, we only need to perform at most 8 steps of $16\times 16$ matrix multiplication to obtain this presentation, and the space cost is at most four matrices of size $16\times 16$.

The third layer is
$U^{(3)} = U_{1,2}^{(3)}\otimes {U_{3,4}^{(3)}}  \otimes{U_{5,6}^{(3)}} \otimes {U_{7,8}^{(3)}}$.
We apply a similar approach and obtain the constraints $(P_{1,2,3,4}^{(3)},P_{1,2,3,4,5,6}^{(3)},P_{3,4,5,6,7,8}^{(3)},P_{5,6,7,8}^{(3)})$ We can also confirm that the conjunction of these constraints uniquely determines the state $\ket{\Phi_3}:=U^{(3)}\ket{\Phi_2}$.

This method can be applied to quantum circuits with more layers (still constant) and more qubits. On the other hand, an efficient simulation method might not exist then \cite{terhal2004adaptivequantumcomputationconstant}. 

One might be tempted to question the effectiveness of this approach because the space required to store local projections grows, in general, exponentially in the number of qubits.
For this example, the representation of the constraints (for layer 3) consists of two square matrices of size $2^4\times 2^4$ and two matrices of size $2^6\times 2^6$.
The total space requirement is
\[
  2\times 2^4\times 2^4+2\times 2^6\times 2^6 = 8\text{,}704~\text{complex numbers}.
\]
In contrast, the output state is an eight-qubit state;
i.e., it can be represented as a $2^{8}=256$-dimensional vector of complex numbers.

Before concluding that the approach sketched above has no merit, consider what happens with a larger number of qubits, e.g., 100 qubits (but still for a depth-3 circuit with the kind of structure shown in \Cref{fig:circuit_1}).
In this case, our approach will generate a set of constraints with at most $50$ matrices, each of which is at most $2^6 \times 2^6$, and thus the total space requirement is at most $50\times 2^6\times 2^6$ $= 204\text{,}800~\text{complex numbers}$.
In contrast, a direct representation of the output state would be a complex vector of dimension $2^{100} \geq 1.26 \times 10^{30}$, which is larger by a factor of more than $10^{23}$.



\paragraph{Constraint-based descriptions and ``light cones''}
In quantum information science, light cones describe the causal structure of information propagation in quantum circuits, analogous to the concept of light cones in relativity.
One can consider both forward light cones and backward light cones.
For instance, for our constraint-based description of a constant-depth quantum circuit, a backward light cone consists of the set of qubits and gates that influence a given local projection (e.g., $P_{1,2,3,4,5,6}^{(3)}$ is influenced by qubits $\{ 1,2,3,4,5,6 \}$).
Conversely, a forward light cone consists of a set of local projections that are influenced by a given input qubit or gate (e.g., qubit 1 influences
$P_{1,2}^{(1)}$, $P_{1,2,3}^{(2)}$, $P_{1,2,3,4}^{(3)}$, and $P_{1,2,3,4,5,6}^{(3)}$).

In a constant-depth quantum circuit, the influence of any gate is restricted to a limited set of qubits within a bounded region, defining a causal light cone.
In the Noisy Intermediate-Scale Quantum (NISQ) era, these constraints make constant-depth circuits particularly valuable, because errors remain localized and do not propagate uncontrollably.

The reason that the constraint-based description of a shallow circuit remains small is essentially an argument about backward light cones.
The number of qubits involved in the local projections of a circuit's constraint-based description at layer $i+1$ grows when a two-qubit unitary links two smaller backward light cones---i.e., the backward light cone at layer $i+1$ is wider than the backward light cones at layer $i$.
In \Cref{fig:circuit_1}, we go from
$(P_{1,2}^{(1)}, {P_{3,4}^{(1)}}, {P_{5,6}^{(1)}}, {P_{7,8}^{(1)}})$ to
$(P_{1,2,3}^{(2)},P_{2,3,4,5}^{(2)},P_{4,5,6,7}^{(2)},P_{6,7,8}^{(2)})$ to
$(P_{1,2,3,4}^{(3)},P_{1,2,3,4,5,6}^{(3)},P_{3,4,5,6,7,8}^{(3)},P_{5,6,7,8}^{(3)})$;
thus, for example, the inputs to the backward light cone for $P_{1,2,3,4,5,6}^{(3)}$ are qubits $\{ 1,2,3,4,5,6 \}$.

Such enlargements depend on the geometry of the qubit arrangement.
In a 1D qubit arrangement, where two-qubit unitaries are applied only to neighboring qubits, the number of qubits in each local projection grows at most linearly in the circuit depth---i.e., from 1 to 2, then at most 4, 6, ..., up to at most \( 2d \), where \( d \) is the circuit depth. 
In contrast, in a 2D qubit arrangement, the number of qubits in the local projections grows at most quadratically in the circuit depth.
Consequently, the computation time for obtaining each local projection grows as $2^{O(d)}$ and $2^{O(d^2)}$, respectively.\footnote{
  These quantities are what were referred to in \Cref{Se:Introduction} as the qubit-interaction cost $f(d)$ in the time-complexity expression $T(n,d) = f(d) \cdot O(n)$.
}
Because there are at most $n$ local projections in our constraint-based description of a shallow circuit, the total time complexity is $2^{O(d)} \cdot O(n)$ and $2^{O(d^2)} \cdot O(n)$, respectively.
For a fixed value of $d$, both $2^{O(d)}$ and $2^{O(d^2)}$ are constants;
consequently, the cost grows linearly in the number of qubits $n$.

In contrast, if one attempts to obtain a direct representation of the output state, computation on vectors with an exponential number of dimensions quickly becomes infeasible.

\paragraph{{Application to equivalence checking}}
We now sketch how the constraint-based representation is used to perform
equivalence-checking.
For example, consider two shallow 100-qubit circuits $U_1$ and $U_2$, both of depth 3, with structures similar to \Cref{fig:circuit_1}.
Suppose that we want to check
their equivalence when 
the input state \( \ket{0}^{\otimes 100} \), i.e., $U_1\ket{0}^{\otimes 100} = \alpha U_2\ket{0}^{\otimes 100}$ for some \( |\alpha| = 1 \).
To do so, we consider the depth-6 circuit \( U_1^{\dag} U_2 \) and analyze whether $U_1^{\dag} U_2 \ket{0}^{\otimes 100} = \alpha \ket{0}^{\otimes 100}$.

To employ the local-projection-based description of the output of the quantum circuit $U_1^{\dag} U_2$, we construct a tuple of local projections, each acting on at most 12 qubits:
$(P_{s_1}^{(6)}, P_{s_2}^{(6)}, \dots, P_{s_{100}}^{(6)})$.
To verify equivalence, we check whether \( \ket{0}^{\otimes 100} \) satisfies this assertion, which reduces to ensuring that each \( P_{s_i}^{(6)} \) contains \( \ket{0}^{\otimes |s_i|} \).
The general principle used here is as follows:
\noindent
\begin{mdframed}
  Constant-depth circuits $U_1$ and $U_2$ are equivalent if and only if the ``double-depth'' (but still constant-depth) circuit $U_1^{\dag} U_2$ maps the $\ket{0}^{\otimes n}$ state perfectly to the $\ket{0}^{\otimes n}$ state.
  This property can be checked via our constraint-based description because $U_1^{\dag} U_2 \ket{0}^{\otimes n} = \alpha\ket{0}^{\otimes n}$ for some $|\alpha|=1$ if and only if $\ket{0}^{\otimes n}$ satisfies all the constraints in the constraint-based description of the circuit for $U_1^{\dag} U_2$.\footnote{
    By design, the intersection in the constraint-based description of any circuit is one-dimensional.
  }
\end{mdframed}
\noindent
In our example, the space comparison between the constraint-based approach versus computing a direct representation of the output state now becomes $100 \times 2^{12} \times 2^{12} \ll 2^{100}$.

\section{Additional Terminology and Notation}
\label{Se:AdditionalTerminologyAndNotation}

\subsection{Reduced Density Matrices and Partial Traces}
\label{Se:ReducedDensityMatrices}


We use \(\op{\psi}{\psi}\) to denote the \emph{density matrix} of a pure quantum state $\ket{\psi}$.
In contrast with a pure quantum state,
a mixed state describes a statistical ensemble of different possible pure states.
A mixed state can be represented by a density matrix \(\rho\) given by \(\rho = \sum_i p_i |\psi_i\rangle \langle \psi_i|\), where \(p_i\) are probabilities summing to 1, reflecting the lack of complete knowledge about the system's state.

A subsystem's state within a larger system from $\mathcal{H}_A \otimes \mathcal{H}_B$ can be represented by a \emph{reduced density matrix}.
The reduced density matrix for subsystem \( A \) is denoted by \( \rho_A := \Tr_{B}(\rho_{AB}):=\sum_i \langle i|_B \rho |i\rangle_B\) \cite{NI11}.
The reduced density matrix encodes the statistical properties and correlations of the subsystem, providing insight into its behavior without needing full knowledge of the entire system.
This definition can be directly generalized to the multipartite setting.

When a unitary operator \( U \) is applied to a density matrix \( \rho \), the resulting 
density matrix
is \( U \rho U^\dagger \), which is clear for \( \rho = \ket{\psi}\bra{\psi} \). When \( U \) acts on system \( A \) of a density matrix \( \rho_{AB} \), the reduced density matrix on \( A \) becomes \( U \rho_A U^\dagger \), while the reduced density matrix on \( B \) remains unchanged.
We define
\( \mathcal{U} := \lambda x.U x U^\dagger \) to denote the quantum operation on reduced density matrices that corresponds to the unitary matrix \( U \) (which operates on quantum states).

\subsection{Projections and Tuples of Projections}
\label{Se:ProjectionsAndTuplesOfProjections}

Since the inception of quantum mechanics, researchers have utilized projections to describe fundamental phenomena. The seminal paper \emph{The Logic of Quantum Mechanics} \citeN{BvN36} employed orthogonal projections as atomic propositions of quantum logic, aligning well with its framework.

\subsubsection{Projections and Projective Measurements}

\textbf{Projections:} A projective operator (projection) \( P \) satisfies \( P^2 = P=P^{\dag} \).
For example, $P=\op{00}{00}+\op{11}{11}$ is a projective operator on a 2-qubit system.
Each projective operator corresponds to a linear subspace $S_P = \{\ket{\psi}| \ P\ket{\psi}=\ket{\psi}\}$.
The correspondence between projections and subspaces facilitates a natural partial order on the set of projections. Specifically, for projections \( P \) and \( Q \), we define: $P \subseteq Q $ {if and only if} $S_P \subseteq S_Q$.
For two projections \( P \) and \( Q \) of the same dimension, \( P \subseteq Q \) if and only if \( QP = P \).

Henceforth, in a slight abuse of notation, we will use a symbol like $P$ to denote both a projection and its corresponding linear subspace $S_P$.
Which meaning is intended should be clear from context.

\textbf{Projective Measurements:} 
In quantum mechanics, measurements determine the state of a quantum system and are described by measurement operators.
Projective measurements use a set of projection operators \( \{ P_i \} \), which satisfy
$P_i^2 =  P_i$ and $ \sum_i P_i = I$ with \( I \) being the identity operator.
Such a set describes a measurement with $|\{ P_i \}|$ possible outcomes.
The probability of outcome \( i \) for a measurement on state \( \ket{\psi} \), and the post-measurement state $\ket{\psi_i'}$ on outcome $i$ satisfy
\[
  \text{Pr}(i) = \bra{\psi} P_i \ket{\psi}
  \qquad\qquad
  \ket{\psi_i'} = \frac{P_i \ket{\psi}}{\sqrt{\bra{\psi} P_i \ket{\psi}}}.
\]

\noindent
Each projection operator $P$ corresponds to a two-outcome measurement set \( \{ P, I-P \} \).

Every projection is positive semi-definite.
For any positive semi-definite matrix $A$, 
its {\em support}, written $\supp(A)$, is the subspace spanned by the eigenvectors of $A$ that have non-zero eigenvalues.
An important property of projective measurements is that a projective measurement with respect to projection $P$ does not disturb a quantum state $\ket{\psi}$ that lies in $\supp(P)$, and will output $P$ with certainty.
If \( \ket{\psi} \not\in \supp(P) \), then the projective measurement \( \{ P, I - P \} \) will yield the outcome corresponding to \( I - P \) with nonzero probability.
Moreover, if the measurement outcome is \( P \), the post-measurement state is given by $\frac{P \ket{\psi}}{\sqrt{\bra{\psi} P \ket{\psi}}} \in \supp(P)$.

Birkhoff and von Neumann \citeN{BvN36} defined that for a density matrix $\rho$ and a projection $P$, 
\[
\rho \mbox { satisfies } P \mbox{ iff } 
\supp(\rho)\subseteq P.
\]
One can easily show that 
$\rho \mbox { satisfies } P$ iff
$P\rho = P\rho P = \rho$.
This property will be used in \Cref{sec:RunTimeAssertions} to perform assertion checking.

The following lemmas justify the method sketched in \Cref{Se:Overview}.
The first lemma
explains how a constraint on a reduced density matrix corresponds to a constraint on the global quantum state.

\begin{lem}[ \cite{YP21}]
\label{lem:extension-with-identity-matrix}
For projection $P_{[n]\setminus s}$ on qubits $[n]\setminus s$, 
$\supp(\tr_{s} A) \subseteq P_{[n]\setminus s}$
iff $\supp(A) \subseteq P_{[n]\setminus s} \otimes I_{s}$.
\end{lem}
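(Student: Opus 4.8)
The plan is to prove both directions of the iff by unwinding the definitions of $\supp$, partial trace, and the subspace/projection correspondence. Write $A$ as a positive semi-definite operator on the full $n$-qubit space, and let $Q := P_{[n]\setminus s} \otimes I_s$, which is itself a projection on $[n]$. The statement $\supp(A) \subseteq Q$ is equivalent, by the characterization recalled just before the lemma, to $Q A Q = A$ (equivalently $QA = A$, since $A$ is positive semi-definite and $Q$ is a projection). Likewise $\supp(\tr_s A) \subseteq P_{[n]\setminus s}$ is equivalent to $P_{[n]\setminus s}\,(\tr_s A)\,P_{[n]\setminus s} = \tr_s A$. So the whole lemma reduces to the operator identity
\[
  (P_{[n]\setminus s}\otimes I_s)\,A\,(P_{[n]\setminus s}\otimes I_s) = A
  \quad\Longleftrightarrow\quad
  P_{[n]\setminus s}\,(\tr_s A)\,P_{[n]\setminus s} = \tr_s A .
\]

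For the forward direction ($\Rightarrow$), I would apply $\tr_s$ to both sides of the left-hand equation and use the fact that $\tr_s$ commutes with multiplication by operators of the form $M \otimes I_s$: concretely, $\tr_s\big((M\otimes I_s) A (N\otimes I_s)\big) = M\,(\tr_s A)\,N$ for any operators $M, N$ on $[n]\setminus s$. Taking $M = N = P_{[n]\setminus s}$ immediately yields $P_{[n]\setminus s}(\tr_s A)P_{[n]\setminus s} = \tr_s A$, which is the desired support containment for the reduced matrix.

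For the converse ($\Leftarrow$), the cleaner route is to work with vectors rather than with the operator identity directly. Suppose $\supp(\tr_s A)\subseteq P_{[n]\setminus s}$; I want $\supp(A) \subseteq P_{[n]\setminus s}\otimes I_s$. Let $P^\perp := I_{[n]\setminus s} - P_{[n]\setminus s}$, the complementary projection, and consider the positive operator $B := (P^\perp \otimes I_s)\,A\,(P^\perp\otimes I_s) \succeq 0$. Then $\tr_s B = P^\perp\,(\tr_s A)\,P^\perp = 0$ by hypothesis. A positive semi-definite operator with zero partial trace must itself be zero (since $\tr B = \tr(\tr_s B) = 0$ and $B \succeq 0$ forces $B = 0$); hence $(P^\perp\otimes I_s)\,A\,(P^\perp\otimes I_s) = 0$, and because $A\succeq 0$ this gives $(P^\perp\otimes I_s)\,A = 0$, i.e. $A = (P_{[n]\setminus s}\otimes I_s)\,A$, which is exactly $\supp(A)\subseteq P_{[n]\setminus s}\otimes I_s$. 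Alternatively one can spell this out on a purification $\ket{\Psi}$ of $A$ and argue that $\bra{\Psi}(P^\perp\otimes I_s)\ket{\Psi} = \tr(P^\perp\,\tr_s A) = 0$, so $\ket{\Psi}$ lies in the kernel of $P^\perp\otimes I_s$.

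The main obstacle—really a bookkeeping point rather than a deep one—is being careful with the tensor factorization and the exact statement of how $\tr_s$ interacts with one-sided and two-sided multiplication by $M\otimes I_s$; one wants to be sure the identity $\tr_s\big((M\otimes I_s)A(N\otimes I_s)\big) = M(\tr_s A)N$ is invoked correctly (it holds because $\tr_s$ is computed by sandwiching with $\bra{i}_s$, $\ket{i}_s$, which commute past $M\otimes I_s$ and $N\otimes I_s$). The other mild subtlety is the step "$A\succeq 0$ and $CAC^\dagger = 0$ implies $CA = 0$," which follows from writing $A = \sqrt A \sqrt A$ so that $CAC^\dagger = (C\sqrt A)(C\sqrt A)^\dagger = 0$ forces $C\sqrt A = 0$, hence $CA = 0$. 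Since $A$ here may be taken to be the (subnormalized) density matrix of a pure state as in \Cref{Se:Overview}, one could even restrict to rank-one $A$, but handling general positive semi-definite $A$ costs nothing extra with this argument.
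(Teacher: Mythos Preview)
Your proof is correct. Note, however, that the paper does not give its own proof of this lemma: it merely states the result and attributes it to \cite{YP21}, so there is nothing in the paper to compare against. Your argument is a clean, self-contained verification---the forward direction via the identity $\tr_s\big((M\otimes I_s)A(N\otimes I_s)\big)=M(\tr_s A)N$, and the converse via the vanishing of the positive operator $(P^\perp\otimes I_s)A(P^\perp\otimes I_s)$ together with the square-root trick---and it handles the general positive-semi-definite case, which is the setting implicit in the paper's definition of $\supp$.
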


The next lemma justifies how our constraint-based representation of a circuit's state ``evolves'' when gate $U$ is the next gate to be applied during the interpretation of a circuit.

\begin{lem}[ \cite{YP21}]
\label{lem:intersection}
For projections $P_i$ and unitary $U$, 
$U(\cap_i P_i)U^{\dag}=\cap_i UP_i U^{\dag}$.
Here the intersection of the projection operators denotes the intersection of the corresponding subspaces.
\end{lem}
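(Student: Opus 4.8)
The plan is to prove the identity at the level of subspaces and then read it back as an identity of projections, exploiting the fact that a unitary $U$ is a linear bijection whose inverse is $U^{\dag}$. First I would record the two routine facts I will lean on: (i) $UP_iU^{\dag}$ is again a projection, since it is Hermitian and $(UP_iU^{\dag})^2 = UP_i^2U^{\dag} = UP_iU^{\dag}$, and its associated subspace is $U(S_{P_i})$, the image of $S_{P_i}$ under $U$; and (ii) likewise $U(\cap_i P_i)U^{\dag}$ is the projection onto $U(\cap_i S_{P_i})$. Granting these, the lemma reduces to the subspace identity
\[
  U\Big(\bigcap_i S_{P_i}\Big) = \bigcap_i U\big(S_{P_i}\big).
\]

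I would prove this by double inclusion. The inclusion $\subseteq$ is immediate from linearity: $\ket{\psi}\in\bigcap_i S_{P_i}$ gives $U\ket{\psi}\in U(S_{P_i})$ for every $i$, hence $U\ket{\psi}\in\bigcap_i U(S_{P_i})$. For $\supseteq$, take $\ket{\phi}\in\bigcap_i U(S_{P_i})$; for each $i$ write $\ket{\phi}=U\ket{\psi_i}$ with $\ket{\psi_i}\in S_{P_i}$, and apply $U^{\dag}$ to conclude $\ket{\psi_i}=U^{\dag}\ket{\phi}$, so that all the $\ket{\psi_i}$ are the single vector $\ket{\psi}:=U^{\dag}\ket{\phi}$, which then lies in every $S_{P_i}$; hence $\ket{\phi}=U\ket{\psi}\in U(\bigcap_i S_{P_i})$. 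Translating back through the projection–subspace correspondence yields $U(\cap_i P_i)U^{\dag}=\cap_i UP_iU^{\dag}$.

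I do not expect any real obstacle: the single substantive ingredient is the invertibility of $U$, which is exactly what forces the $\ket{\psi_i}$ in the $\supseteq$ direction to coincide — for a general (non-injective) linear map the image of an intersection can be a proper subspace of the intersection of the images. The one point worth being careful about is the standing convention that all $P_i$ act on the same $n$-qubit space (padding with identities as in \Cref{Se:Overview}), so that $\bigcap_i S_{P_i}$ and the various $U(S_{P_i})$ sit inside a common Hilbert space; once that is fixed the argument is complete. An alternative, purely operator-level route is to characterize $\bigcap_i Q_i$ as the largest projection $R$ with $Q_i R = R$ for all $i$, and to check that conjugation by $U$ preserves this universal property using $R\subseteq Q \iff QR = R$; but the subspace argument above is the most transparent.
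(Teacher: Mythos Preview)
Your argument is correct and is essentially the approach the paper intends: the paper simply remarks that the lemma ``follows from the definition of subspace intersection,'' and your double-inclusion proof at the subspace level, using the bijectivity of $U$, is exactly the routine verification behind that remark.
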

The proof follows from the definition of subspace intersection.

We say two projections $P$ and $Q$ \emph{commute} if $PQ=QP$.


\begin{lem}
\label{lem:commuting-2}
For commuting projections $P_i$,
$\Pi_i P_i$ is a projection corresponding to $\cap_i P_i$ 
\end{lem}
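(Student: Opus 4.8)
The plan is to prove the two assertions of Lemma~\ref{lem:commuting-2} in turn: first that the product operator $\Pi_i P_i$ is itself a projection, and second that its associated subspace is exactly $\cap_i P_i$. For the first part, I would proceed by induction on the number of projections, so it suffices to treat the case of two commuting projections $P$ and $Q$ and then iterate (noting that a product of commuting projections still commutes with any further $P_j$, since $P_j (PQ) = (P_j P) Q = (P P_j) Q = P (P_j Q) = P(Q P_j) = (PQ) P_j$). For two commuting projections, idempotence is immediate: $(PQ)^2 = PQPQ = PPQQ = PQ$ using $PQ=QP$ and $P^2=P$, $Q^2=Q$. Self-adjointness is equally direct: $(PQ)^\dagger = Q^\dagger P^\dagger = QP = PQ$, using $P=P^\dagger$, $Q=Q^\dagger$, and commutativity. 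Hence $PQ$ satisfies $(PQ)^2 = PQ = (PQ)^\dagger$, so it is a projection.

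For the second part, I must show $S_{\Pi_i P_i} = \cap_i S_{P_i}$. Again it suffices to do the two-projection case and induct. Recall $S_P = \{\ket{\psi} : P\ket{\psi} = \ket{\psi}\}$. If $\ket{\psi} \in S_P \cap S_Q$, then $PQ\ket{\psi} = P\ket{\psi} = \ket{\psi}$, so $\ket{\psi} \in S_{PQ}$; this inclusion does not even need commutativity. Conversely, suppose $PQ\ket{\psi} = \ket{\psi}$. Applying $P$ to both sides gives $P\ket{\psi} = PPQ\ket{\psi} = PQ\ket{\psi} = \ket{\psi}$, so $\ket{\psi} \in S_P$. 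To get $\ket{\psi} \in S_Q$, I use commutativity: $Q\ket{\psi} = QPQ\ket{\psi} = PQQ\ket{\psi} = PQ\ket{\psi} = \ket{\psi}$. Hence $\ket{\psi} \in S_P \cap S_Q$, giving the reverse inclusion and therefore equality. The inductive step then follows from $S_{P_1 \cdots P_k} = S_{(P_1 \cdots P_{k-1})} \cap S_{P_k} = (\cap_{i<k} S_{P_i}) \cap S_{P_k}$, using that $P_1 \cdots P_{k-1}$ is a projection commuting with $P_k$.

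I do not anticipate a serious obstacle here; the result is essentially a standard fact from the theory of projections on Hilbert spaces. The only point requiring mild care is keeping the induction clean—specifically, verifying that the partial product $P_1 \cdots P_{k-1}$ remains both a projection (so that the two-projection argument applies to it) and commutes with $P_k$ (so that the reverse inclusion goes through). Both are handled by the observations above. One could alternatively phrase the whole argument in terms of the operator inequality $0 \le PQ$ and the characterization of $\cap_i P_i$ as the largest projection below all the $P_i$, but the direct elementary computation on subspace membership is cleaner and self-contained, so that is the route I would take.
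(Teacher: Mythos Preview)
Your proposal is correct and is precisely the kind of argument the paper has in mind: the paper itself does not spell out a proof, merely remarking that it ``follows from the definitions,'' and your induction on the number of projections---checking idempotence, self-adjointness, and the two subspace inclusions for the two-projection case---is the standard elaboration of that remark.
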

Again, the proof follows
from the definitions.


\subsubsection{Tuples of Projections as Assertions}
\label{Se:TuplesOfProjectionsAsAssertions}


As the number of qubits increases, the dimension of the vector representation of a state increases exponentially.
For any integer \(1 \leq m \leq 2^n\) and \(m\)-tuple \(S = (s_1, \cdots, s_m)\) with \(s_i \subseteq [n]\),
where $[n]=\{1,2,\cdots,n\}$,
we can use a tuple of projections $(P_{s_1}, \cdots, P_{s_m})$
to represent a tuple of constraints: each $P_{s_i}$ imposes a constraint on the qubits $\{ q_p \mid p \subseteq s_i  \}$ (and only on the qubits in $\{ q_p \mid p \subseteq s_i  \}$).
In tuple \( S \),
we impose no restrictions on the relationships between the \( s_i \)'s.
For example, it is permissible to have
a given set \( s \) appear multiple times in \( S \)---e.g., as $s_i = s$ and $s_j = s$, and it is also allowed for some \( s_k \) to be a subset of \( s_\ell \) for \( k \neq \ell \).

We introduce the assertion language $\mathcal{L}$, which
consists of
conjunctions of local projections.
For any tuple of projections, $\mathcal{P}=(P_{s_1}, \cdots, P_{s_m})$ and
reduced density matrix $\rho$ over $\overline{q}$, i.e., all qubits in $[n]$,
we say that $\ass(\overline{q};\mathcal{P})$ holds if
$\rho_{s_i} := \Tr_{[n] \setminus s_i}(\rho)$, the reduced density matrix with respect to
qubits $s_i$ (defined in \Cref{Se:ReducedDensityMatrices}), satisfies $P_{s_i}$ for each $i$.
By Lemma \ref{lem:extension-with-identity-matrix}, we can characterize the set of states where $\ass(\overline{q};\mathcal{P})$ holds as follows:
$\bigcap_i \left( P_{s_i}\otimes I_{[n]\setminus s_i} \right)$.
%

We are adopting the machinery introduced in \cite{YP21}, which uses tuples of local projections as assertions. However, we use it for a different purpose and have obtained different results compared to those in \cite{YP21}. For a comprehensive comparison, see \Cref{Se:ComparisonWithQuantumAbstractInterpretation}.


\subsection{Choi States}
\label{Se:ChoiState}

The Choi state is a fundamental concept in quantum information theory, used to represent quantum channels, including unitaries. 
The Choi state of a unitary operator \( U \) is constructed by applying \( U \) to one-half of a maximally entangled state. Specifically, if the maximally entangled state is $\ket{\Phi^+}: = \frac{1}{\sqrt{d}} \sum_{i=0}^{d-1} \ket{i}\ket{i}$
with \( d \) being the dimension of the system, the Choi state \( \rho_U \) is given by:
\[
\rho_U = (I \otimes U) \ket{\Phi^+}\bra{\Phi^+} (I \otimes U^\dagger).
\]

A key property of the Choi state proved in \cite{CHOI1975285} is that
\begin{lem}\label{choi}
For two unitary operators $U$ and $V$, we have
\[
\rho_U = \rho_V \Leftrightarrow (I \otimes U) \ket{\Phi^+}=e^{i\theta}(I \otimes V) \ket{\Phi^+}\Leftrightarrow U=e^{i\theta}V \Leftrightarrow  U\ket{\psi}=\alpha_{\psi} V\ket{\psi}, \ \forall \ \ket{\psi} \ \ \mathrm{and}\  \mathrm{some}\  |\alpha_{\psi}|=1.
\]
\end{lem}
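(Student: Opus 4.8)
The plan is to prove the chain of equivalences in \Cref{choi} by establishing a cycle of implications, treating each arrow separately, since several directions are essentially definitional.

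\textbf{Step 1: $\rho_U = \rho_V \Rightarrow (I \otimes U)\ket{\Phi^+} = e^{i\theta}(I \otimes V)\ket{\Phi^+}$.}
Both $(I\otimes U)\ket{\Phi^+}$ and $(I\otimes V)\ket{\Phi^+}$ are unit vectors (since $I\otimes U$ and $I\otimes V$ are unitary and $\ket{\Phi^+}$ is a unit vector). Their density matrices $\rho_U = (I\otimes U)\ket{\Phi^+}\bra{\Phi^+}(I\otimes U^\dagger)$ and $\rho_V$ are rank-one projectors. If two rank-one projectors $\op{\phi}{\phi}$ and $\op{\chi}{\chi}$ onto unit vectors are equal, then $\ket{\phi} = e^{i\theta}\ket{\chi}$ for some real $\theta$: apply both sides to $\ket{\chi}$ to get $\ket{\phi}\langle\phi|\chi\rangle = \ket{\chi}$, so $\ket{\phi}$ and $\ket{\chi}$ are proportional, and the proportionality constant has modulus $1$ by normalization.

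\textbf{Step 2: $(I\otimes U)\ket{\Phi^+} = e^{i\theta}(I\otimes V)\ket{\Phi^+} \Rightarrow U = e^{i\theta}V$.}
This is the ``Choi–Jamio\l{}kowski'' rigidity step and is the technical heart of the lemma. Multiplying both sides by $(I\otimes V^\dagger)$ gives $(I\otimes V^\dagger U)\ket{\Phi^+} = e^{i\theta}\ket{\Phi^+}$, so it suffices to show that if $(I\otimes W)\ket{\Phi^+} = \lambda\ket{\Phi^+}$ then $W = \lambda I$. Expand $\ket{\Phi^+} = \frac{1}{\sqrt d}\sum_i \ket{i}\ket{i}$; then $(I\otimes W)\ket{\Phi^+} = \frac{1}{\sqrt d}\sum_i \ket{i}\otimes W\ket{i} = \frac{1}{\sqrt d}\sum_{i,j} W_{ji}\ket{i}\ket{j}$, while $\lambda\ket{\Phi^+} = \frac{1}{\sqrt d}\sum_{i,j}\lambda\delta_{ij}\ket{i}\ket{j}$. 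Matching coefficients of the orthonormal basis $\{\ket{i}\ket{j}\}$ yields $W_{ji} = \lambda\delta_{ij}$, i.e.\ $W = \lambda I$. (Equivalently, this is the ``transpose trick'' $(I\otimes W)\ket{\Phi^+} = (W^T\otimes I)\ket{\Phi^+}$ combined with the fact that $\ket{\Phi^+}$ has full Schmidt rank.) Here $\lambda = e^{i\theta}$ has modulus $1$ automatically since $W = V^\dagger U$ is unitary.

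\textbf{Step 3: $U = e^{i\theta}V \Rightarrow U\ket{\psi} = \alpha_\psi V\ket{\psi}\ \forall\ket{\psi}$ with $|\alpha_\psi|=1$.}
Immediate: take $\alpha_\psi = e^{i\theta}$ for every $\ket{\psi}$.

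\textbf{Step 4: $U\ket{\psi} = \alpha_\psi V\ket{\psi}\ \forall\ket{\psi} \Rightarrow \rho_U = \rho_V$ (closing the cycle).}
First upgrade the hypothesis to a global phase: the standard argument is that if $U\ket{\psi} = \alpha_\psi V\ket{\psi}$ for all $\ket{\psi}$, then $\alpha_\psi$ is actually independent of $\ket{\psi}$. Indeed, taking two non-parallel vectors $\ket{\psi_1},\ket{\psi_2}$ and their sum, linearity of $U$ and $V$ forces $\alpha_{\psi_1} = \alpha_{\psi_2} =: e^{i\theta}$; so $U = e^{i\theta}V$ as operators. Then $(I\otimes U)\ket{\Phi^+} = e^{i\theta}(I\otimes V)\ket{\Phi^+}$, and the global phase cancels when forming the outer product, giving $\rho_U = \rho_V$.

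\textbf{Main obstacle.} The only nonroutine step is Step 2 — the injectivity of the Choi map on unitaries up to phase — but it reduces cleanly to the full-Schmidt-rank property of $\ket{\Phi^+}$ via the coefficient-matching (or transpose-trick) computation above; everything else is either definitional or a short linearity argument. One bookkeeping subtlety worth stating carefully is the quantifier handling in Step 4, i.e.\ promoting the pointwise phases $\alpha_\psi$ to a single global phase before invoking the operator identity; this is the place where one must use linearity of $U$ and $V$ rather than treating the hypothesis as merely a statement about rays.
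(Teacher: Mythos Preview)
Your proof is correct and complete; the cycle of implications is established cleanly, with the only substantive step (injectivity of the Choi map up to phase) handled via the coefficient-matching argument on $\ket{\Phi^+}$, and the promotion of pointwise phases $\alpha_\psi$ to a global phase in Step~4 is the right way to close the loop. The paper itself does not supply a proof of this lemma at all---it simply attributes the result to \cite{CHOI1975285}---so there is no in-paper argument to compare against; your writeup is more detailed than what the paper provides.
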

\section{An Efficient Description of the Output of a Shallow Circuit}
\label{Se:AnEfficientDescription}

In this section, we will provide an algorithm to compute an efficient description of the output of a shallow circuit.
To reduce notational clutter, the theorems are stated for circuits of 2-qubit gates, but with only slight generalization of the notation, the results apply to circuits composed of 3-qubit gates, 4-qubit gates, etc.


Let us consider an $L$-layer circuit 
for an $n$-qubit computation.
Each layer can be represented as a tensor product of 2-qubit unitary matrices.
Let $U_{ij}^{(k)}$ denote the 2-qubit unitary acting on qubits \( i \) and \( j \) in circuit layer \( k \).
The final quantum state obtained after applying \( L \) layers of such matrices can be expressed as follows:
\[
|\psi_{\text{final}}\rangle = \left( \prod_{k=1}^{L} \bigotimes_{(i,j) \in \text{pairs}[k]} U_{ij}^{(k)} \right) |0^{\otimes n}\rangle.
\]
We focus on the case that $L$ is a constant that does not depend on $n$, and say that such a circuit is \emph{shallow}.
As we will show below, our results apply to the \emph{family} of circuits $\mathbb{C}_L$ with a fixed depth $L$ and a varying number of qubits $n$.

By generalizing the example from \Cref{Se:Overview}, we show the following theorem.
(For the proof, see
\Cref{sec:ProofOfTheorem}.)

\renewcommand{\algorithmicensure}{\textbf{Input:}}
\renewcommand{\algorithmicrequire}{\textbf{Output:}}
\begin{algorithm}[tb!]
	\caption{Compute an efficient classical description of a shallow circuit}
	  \label{alg:description}
	\begin{algorithmic}[1]
		\Ensure Classical description of unitaries $U_{ij}^{(k)}$ with $1\leq k\leq L$ in a depth-$L$ circuit $\left( \prod_{k=1}^{L} \bigotimes_{(i,j) \in \text{pairs}[k]} U_{ij}^{(k)} \right) $. 
		\Require Output a tuple of local projections $(P_{s_1},\cdots,P_{s_n})$ with $s_t\subseteq [n]$ and $|s_t|$ is bounded by a function of $L$. The conjunction of these local projections uniquely identify the circuit output $\prod_{k=1}^{L} \bigotimes_{(i,j) \in \text{pairs}[k]} U_{ij}^{(k)}  |0^{\otimes n}\rangle$.
  
      \For {$m\gets 1$ to $n$}
    \State    $s_m \gets \{m\}$  \ \ \ \ \ \ \ \ \ \ \    \ \ \ \ \ \ \ \ \ \   \ \ \ \ \ \ \ \ \ \ /* Initialize the domain*/
        
     \State   $P_{s_m}\gets \op{0}{0}$ \ \ \ \  \ \ \   \ \ \ \ \ \ \ \ \ \  \ \ \ \ \ \ \ \ \ \   /* Initialize the local projections*/
        \EndFor
   \For {$k\gets 1$ to $L$}    \ \ \   \ \ \ \ \ \ \ \ \  \ \ \ \  \ \ \ \ \ \ \ \ \ \ /* For each layer of unitaries */  \label{Li:LayerLoopBegin}

   \For {$t\gets 1$ to $n$}  \ \ \  \ \ \ \ \ \ \ \ \ \ \ \ \ \ \ \ \ /* Update each local projection $P_{s_t}$*/

    \State $r_t \gets \emptyset$
    
     \State $l_t \gets \emptyset$

   \For{each $U_{i,j}^{(k)}$} 
    \ \ \  \ \ \ \ \ \ \ \ \ \  \ \ \ /* Check whether $s_i$ has overlap with $\{i,j\}$*/

   \If{$s_t\cap \{i,j\}\neq \emptyset$}
   
   \State  $r_t \gets r_t\cup \{i,j\}$
   
   \State $l_t \gets l_t\cup \{(i,j)\}$
   \EndIf
   \EndFor
   \State $g_t\gets s_t$
   
 \State  $s_t\gets s_t\cup r_t$  \ \ \  \ \ \ \ \ \ \ \ \ \   \ \ \ \ \ \ \ \ \ \   \ \ \  \ \ \ /* Update $s_i$*/

  \State $U_t^{(k)}\gets \bigotimes_{(i,j)\in l_t} U_{i,j}^{(k)}$
   
 \State  $P_{s_t}\gets U_t^{(k)}(P_{g_t}\otimes I_{s_t\setminus g_t}){U_t^{(k)}}^{\dag}$   \ \ \  \ \ \ \ \ \ \   /* Update $P_{s_i}$*/

  \EndFor
   \EndFor        \label{Li:LayerLoopEnd}
	\end{algorithmic}
\end{algorithm}

\begin{thm}\label{main}
For a shallow circuit output $|\psi_{\text{final}}\rangle = \left( \prod_{k=1}^{L} \bigotimes_{(i,j) \in \text{pairs}[k]} U_{ij}^{(k)} \right) |0^{\otimes n}\rangle$ with constant $L$, Algorithm \ref{alg:description}
outputs a tuple of local projections $(P_{s_1},\cdots,P_{s_m})$ that uniquely identifies $|\psi_{\text{final}}\rangle $ in the following sense:
\begin{align*}
\cap_{t=1}^n (P_{s_t} \otimes I_{[n]\setminus s_t})=\{\lambda\ket{\psi_{\text{final}}}| \lambda\in \mathbb{C}\}.
\end{align*} 
Furthermore, each set $s_t$ contains at most
a constant number of
qubits, and the local projections $P_{s_1},\cdots,P_{s_m}$ pairwise commute. Specifically, the
body of the loop in lines \ref{Li:LayerLoopBegin}--\ref{Li:LayerLoopEnd}
of \Cref{alg:description} needs to be executed \(L\) times for a depth-\(L\) circuit. For a shallow circuit, where \(L\) is a constant, the time complexity is linear in \(n\) because each iteration of the loop body takes time linear in \(n\).
\end{thm}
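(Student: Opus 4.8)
The plan is to prove the theorem by induction on the number of layers processed. For $0 \le k \le L$, write $\ket{\Phi_k}$ for the state obtained by applying the first $k$ layers to $\ket{0^{\otimes n}}$ (so $\ket{\Phi_0}=\ket{0^{\otimes n}}$ and $\ket{\Phi_L}=\ket{\psi_{\text{final}}}$), and write $s_t^{(k)}$ and $P_{s_t}^{(k)}$ for the values held by the program variables $s_t$ and $P_{s_t}$ at the end of the $k$-th pass through the loop on lines~\ref{Li:LayerLoopBegin}--\ref{Li:LayerLoopEnd} (with $k=0$ meaning ``after the initialization loop''). I would carry the loop invariant with four clauses: (a) $P_{s_t}^{(k)}$ is an orthogonal projection that acts nontrivially only on the qubits in $s_t^{(k)}$; (b) the operators $\{P_{s_t}^{(k)} \otimes I_{[n] \setminus s_t^{(k)}}\}_{t=1}^{n}$ pairwise commute; (c) $\bigcap_{t=1}^{n}\bigl(P_{s_t}^{(k)} \otimes I_{[n] \setminus s_t^{(k)}}\bigr) = \{\lambda \ket{\Phi_k} \mid \lambda \in \mathbb{C}\}$; and (d) $|s_t^{(k)}| \le 2^k$. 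The base case $k=0$ is immediate from the initialization $s_t=\{t\}$, $P_{s_t}=\op{0}{0}$: clauses (a), (b), and (d) hold because the $P_{s_t}$ sit on disjoint single qubits, and (c) holds because requiring $(\op{0}{0}_t \otimes I)\ket{\psi} = \ket{\psi}$ for every $t$ forces $\ket{\psi} \propto \ket{0^{\otimes n}}$.

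For the inductive step, let $W = \bigotimes_{(i,j) \in \text{pairs}[k+1]} U^{(k+1)}_{ij}$ be the unitary of layer $k+1$, so $\ket{\Phi_{k+1}} = W\ket{\Phi_k}$. Fix $t$ and, exactly as in \Cref{alg:description}, let $l_t$ be the set of gate pairs of layer $k+1$ whose support meets $s_t^{(k)}$, let $r_t = \bigcup_{(i,j) \in l_t}\{i,j\}$, let $U^{(k+1)}_t = \bigotimes_{(i,j) \in l_t} U^{(k+1)}_{ij}$, and set $s_t^{(k+1)} = s_t^{(k)} \cup r_t$. The core step is the localization identity
\[
W\bigl(P_{s_t}^{(k)} \otimes I_{[n] \setminus s_t^{(k)}}\bigr) W^{\dag} = \Bigl( \bigl(U^{(k+1)}_t \otimes I_{s_t^{(k)} \setminus r_t}\bigr)\bigl(P_{s_t}^{(k)} \otimes I_{r_t \setminus s_t^{(k)}}\bigr)\bigl(U^{(k+1)}_t \otimes I_{s_t^{(k)} \setminus r_t}\bigr)^{\dag} \Bigr) \otimes I_{[n] \setminus s_t^{(k+1)}},
\]
whose right-hand side is exactly the projection the algorithm assigns to $P_{s_t}$. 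It holds because the gates within one layer have pairwise disjoint supports: any gate of layer $k+1$ that touches $s_t^{(k)}$ belongs to $l_t$, so every gate outside $l_t$ is supported in $[n] \setminus (s_t^{(k)} \cup r_t) = [n] \setminus s_t^{(k+1)}$; hence, relative to the bipartition $s_t^{(k+1)} \mid [n] \setminus s_t^{(k+1)}$, $W$ factors as $\bigl(U^{(k+1)}_t \otimes I_{s_t^{(k)} \setminus r_t}\bigr) \otimes V_t$ with $V_t$ supported on $[n] \setminus s_t^{(k+1)}$, and since $P_{s_t}^{(k)} \otimes I_{[n] \setminus s_t^{(k)}} = \bigl(P_{s_t}^{(k)} \otimes I_{r_t \setminus s_t^{(k)}}\bigr) \otimes I_{[n] \setminus s_t^{(k+1)}}$, conjugation splits across this bipartition and the second factor is $V_t I V_t^{\dag} = I$. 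This gives clause~(a) at $k+1$. Clause~(b) is preserved because conjugating a commuting family by a single unitary keeps it commuting. Clause~(c) follows from \Cref{lem:intersection}: applying $W(\cdot)W^{\dag}$ to both sides of (c) at level $k$ and pushing it through the intersection yields $\bigcap_t W\bigl(P_{s_t}^{(k)} \otimes I_{[n]\setminus s_t^{(k)}}\bigr)W^{\dag} = \{\lambda \ket{\Phi_{k+1}} \mid \lambda \in \mathbb{C}\}$, and the localization identity rewrites the left-hand side as $\bigcap_t\bigl(P_{s_t}^{(k+1)} \otimes I_{[n] \setminus s_t^{(k+1)}}\bigr)$.

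For clause~(d), the pairs in $l_t$ are disjoint and each contains at least one qubit of $s_t^{(k)}$, so $|l_t| \le |s_t^{(k)}|$ and each such pair contributes at most one new qubit to $r_t$; hence $|s_t^{(k+1)}| \le 2|s_t^{(k)}| \le 2^{k+1}$. (For $m$-qubit gates the same counting gives $|s_t^{(k+1)}| \le m\,|s_t^{(k)}|$, hence $|s_t^{(k)}| \le m^k$; under a geometric restriction such as a 1D layout one gets the sharper bound $|s_t^{(k)}| \le 2k$.) In every case $|s_t^{(L)}|$ is bounded by a function of $L$ alone. Taking $k = L$ in the invariant gives the theorem's first three assertions. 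For the running time, the loop on lines~\ref{Li:LayerLoopBegin}--\ref{Li:LayerLoopEnd} runs $L$ times, and within one pass, after an $O(n)$ step that indexes the gates of the current layer by the qubits they act on, for each $t$ the sets $l_t, r_t$ are computed in time $O(|s_t|) = O(1)$ and the update of $P_{s_t}$ is a bounded number of multiplications of matrices of size at most $2^{c(L)} \times 2^{c(L)}$ for a constant $c(L)$ depending only on $L$; thus one pass costs $O(n)$ for fixed $L$, and the algorithm as a whole costs $L \cdot O(n) = O(n)$.

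The step I expect to be the real obstacle is the localization identity: one must show that conjugating the padded projection $P_{s_t}^{(k)} \otimes I$ by the full layer unitary $W$ both preserves the ``projection on $s_t$, tensored with the identity'' shape and enlarges the active qubit set by exactly $r_t$ and no more. The property that makes this work — and the reason the support growth in clause~(d) stays bounded — is that a layer is a tensor product of gates acting on pairwise disjoint qubit sets, so $W$ splits cleanly across the bipartition $s_t^{(k+1)} \mid [n] \setminus s_t^{(k+1)}$; \Cref{lem:intersection} then carries the one-dimensional intersection through the layer, and the commutation and complexity clauses are routine bookkeeping.
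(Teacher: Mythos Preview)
Your proposal is correct and follows essentially the same approach as the paper's proof: induction on the number of processed layers, with the base case given by the initialization, the inductive step carried by pushing the layer unitary through the intersection via \Cref{lem:intersection}, the observation that gates disjoint from $s_t$ cancel under conjugation (your ``localization identity''), and the doubling bound $|s_t^{(k+1)}| \le 2|s_t^{(k)}|$ from disjointness of gate supports within a layer. Your presentation is somewhat more explicit than the paper's---you isolate the four invariant clauses and spell out the bipartition $s_t^{(k+1)} \mid [n]\setminus s_t^{(k+1)}$ across which $W$ factors---but the logical content is the same.
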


\textbf{Remark:} The size of \( s_t \) depends on the circuit's geometry, which constrains the qubits that can interact.
In typical hardware models, such as 1D and 2D arrangements, we have \( |s_t| \leq 2d \) and \( |s_t| = O(d^2) \), respectively.
Therefore, for a fixed value of \( d \), \( |s_t| \) remains constant with respect to \( n \).
The proof in \Cref{sec:ProofOfTheorem}
does not explicitly account for the circuit's geometry; 
however, the argument holds similarly for both 1D and 2D arrangements.
\textbf{End Remark.}


\paragraph{The Bravyi-Gosset-König Algorithm}
\Cref{main} offers new insight into the Bravyi-Gosset-König Algorithm.
\citet{Bravyi_2018} showed that
there is a constant-depth circuit that can solve the following non-oracle version of the Bernstein-Vazirani problem \cite{bernstein1993quantum}, called
the 2D Hidden Linear Function problem. It is defined as follows:
given $q\, : \, \FF_2^n \to \ZZ_4:
q(x)=2\sum_{1\le \alpha< \beta\le n}  A_{\alpha,\beta}\, x_\alpha x_\beta + \sum_{\alpha=1}^n b_\alpha x_\alpha$,
where \(n=N^2\), \( x_1, \ldots, x_n, A_{\alpha,\beta}, b_\alpha \in \{0,1\} \) and \( A_{\alpha,\beta} = 1 \) if \((\alpha, \beta)\) is an edge in the \( N \times N \) grid.
The goal is to
output some $z$ such that $q(x)=2z^T x$ for all $x\in L_q$, where $L_q=\{ x\in \FF_2^n \, : \,  q(x\oplus y)=q(x) + q(y) \quad \forall y\in \FF_2^n\}$.

Let \( K_q \) denote the set of all solutions \( z \).
\citeauthor{Bravyi_2018} showed that their constant-depth quantum circuit can generate an equal superposition of
the
elements in \( K_q \), i.e., $\frac{1}{|K_q|} \sum_{z \in K_q} \ket{z}$.
As a consequence of \Cref{main}, the equal superposition of solutions in $K_q$ to the 2D Hidden Linear Function is uniquely determined by the constant-size collection of local projections.
Thus, our approach unveils a previously unknown feature of \( K_q \).

\section{Equivalence Checking}
\label{Se:EquivalenceChecking}

This section presents efficient algorithms to resolve the two equivalence-checking problems for shallow quantum circuits posed in \Cref{Se:Introduction} (see \Cref{{equi-w},{equi-s}} below).
Let us consider two circuits that define two $n$-qubit systems,
\[
  C_0: \left( \prod_{k=1}^{L} \bigotimes_{(i,j) \in \text{pairs}[k]} U_{ij}^{(k)} \right) \qquad\mathrm{and}\qquad
  C_1: \left( \prod_{k=1}^{T} \bigotimes_{(r,t) \in \text{pairs}[k]} V_{rt}^{(k)} \right).
\]
We want to check whether they are equivalent or not. Two different definitions of equivalence are of interest.
Both are motivated by the property that quantum states that differ by a phase factor \(\exp(i\theta)\) can be considered identical because they behave the same under quantum operations.

\begin{definition}[Weak equivalence]\label{equi-w}
Two circuits, \(C_0\) and \(C_1\), are equivalent if the output states obtained after executing them on \(\ket{0}^{\otimes n}\) are identical up to a phase factor.
\end{definition}

We are also interested in the following, stronger notion of equivalence:

\begin{definition}[Strong equivalence]\label{equi-s}
Two circuits, \(C_0\) and \(C_1\), are equivalent
if, for each $n$-qubit state \(\ket{\psi}\), the respective output states
obtained after executing them on \(\ket{\psi}\) are identical up to a phase factor.
\end{definition}

\subsection{Weak Equivalence Checking under \Cref{equi-w}}
\label{sec:EquivalenceCheckingOne}

Let us consider the problem of checking whether $C_0$ and $C_1$ are equivalent under \Cref{equi-w}.

An immediate idea might be to use Theorem \ref{main} to compute descriptions of 
$\left( \prod_{k=1}^{L} \bigotimes_{(i,j) \in \text{pairs}[k]} U_{ij}^{(k)} \right)
\ket{0}^{\otimes n}$ and $\left( \prod_{k=1}^{T} \bigotimes_{(r,t) \in \text{pairs}[k]} V_{rt}^{(k)} \right)\ket{0}^{\otimes n}$, denoted by $(P_{s_1},\cdots,P_{s_n})$ and $(Q_{t_1},\cdots, Q_{t_n})$, respectively. 
However, the respective tuples of qubit sets
$(s_1,\cdots,s_n)$ and $(t_1,\cdots,t_n)$ can be very different, even if $C_0$ and $C_1$ are equivalent under \Cref{equi-s}.
Consequently, this approach would not, in general, allow us to check the equivalence of
shallow circuits because there is no canonical form for $(P_{s_1}, \ldots, P_{s_n})$ in
our shallow-circuit descriptions.

To overcome this issue,
we observe the following:
\[
\begin{array}{@{\hspace{0ex}}r@{\hspace{0.75ex}}l@{\hspace{0ex}}}
                        & U\ket{0}^{\otimes n}=\alpha V\ket{0}^{\otimes n},~ ~\mathrm{for~some}~~ |\alpha|=1\\
  \Longleftrightarrow \,& \alpha^{-1}V^{\dag}U\ket{0}^{\otimes n} =\ket{0}^{\otimes n}  \\
  \Longleftrightarrow \,& \ket{0}^{\otimes n}~\mathrm{satisfies}~(R_{l_1},\cdots,R_{l_n}),
\end{array}
\]
where $U:=\left( \prod_{k=1}^{L} \bigotimes_{(i,j) \in \text{pairs}[k]} U_{ij}^{(k)} \right) $, $V:=\left( \prod_{k=1}^{T} \bigotimes_{(r,t) \in \text{pairs}[k]} V_{rt}^{(k)} \right)$, and $(R_{l_1},\cdots,R_{l_n})$ is the description of $V^{\dag}U\ket{0}^{\otimes n}$ obtained by Algorithm \ref{alg:description}.
This condition can be checked efficiently because the new circuit is still shallow:
the value of $L+T$ is independent of the number of qubits $n$.

\subsection{Equivalence Checking under \Cref{equi-s}}
\label{sec:EquivalenceCheckingTwo}

This section provides an efficient algorithm for equivalence checking as defined in \Cref{equi-s}.

\Cref{equi-w} only requires equal output states for the input $\ket{0}^{\otimes n}$, while \Cref{equi-s} requires equal outputs for every possible input state $\ket{\psi}$. It is not feasible to range over all $n$-qubit states $\ket{\psi}$, because there are infinitely many, most of which are highly entangled and require an exponential number of parameters to describe. Our idea to check for equivalent operations by checking for equivalent Choi states, thanks to \Cref{choi}, with $C_0: U = \left( \prod_{k=1}^{L} \bigotimes_{(i,j) \in \text{pairs}[k]} U_{ij}^{(k)} \right)$ and $C_1: V = \left( \prod_{k=1}^{T} \bigotimes_{(r,t) \in \text{pairs}[k]} V_{rt}^{(k)} \right)$. Omitting the normalization factor $\frac{1}{2^n}$, we only need to consider $C_0$ and $C_1$ as shallow circuits on $2n$ qubits and take a single input state $\sum_{m=0}^{2^n-1}\ket{m} \ket{m}$.
{\small
\begin{align*}
  &U \ket{\psi}=\alpha_{\psi}V\ket{\psi} \  \ \forall \ket{\psi}, ~\mathrm{for~some}~ |\alpha_{\psi}|=1\\
  \Leftrightarrow &\sum_{m=0}^{2^n-1}\ket{m} U \ket{m}=\alpha\sum_{m=0}^{2^n-1}\ket{m}V\ket{m}, ~\mathrm{for~some}~ |\alpha|=1\\
  \Leftrightarrow &\left(I_{[n]}\otimes U \right) \sum_{m=0}^{2^n-1}\ket{m} \ket{m}=\alpha\left( I_{[n]}\otimes V \right)\sum_{m=0}^{2^n-1}\ket{m} \ket{m}, ~\mathrm{for~some}~ |\alpha|=1.
\end{align*}
}
The last step is to observe that $\sum_{m=0}^{2^n-1}\ket{m} \ket{m}$ can be generated with a single layer of unitaries,
\[
\frac{1}{\sqrt{2^n}}\sum_{m=0}^{2^n-1}\ket{m} \ket{m}=(\frac{1}{\sqrt{2}}(\ket{00}+\ket{11}))^{\otimes n}=\otimes_{p=1}^n \mathit{CNOT}(H\otimes I)\ket{00}).
\]
Here, we pair the original $n$ qubits with the additional $n$ qubits as $(p, p')$, and create entanglement between each pair $p$ and $p'$ without entangling distinct pairs.
One can verify that $W\ket{00}=\frac{1}{\sqrt{2}}(\ket{00}+\ket{11})$ with two-qubit gate $W := \mathit{CNOT}(H\otimes I)$.

In summary, we have reduced the problem of checking the equivalence of $C_0$ and $C_1$ under \Cref{equi-s} to the problem of checking the equivalence of $C_0'$ and $C_1'$ on $\ket{0}^{\otimes n}$, where
\[
\begin{array}{@{\hspace{0ex}}r@{\hspace{0.75ex}}c@{\hspace{0.75ex}}l@{\hspace{0ex}}}
  C_0' & := & \left(I_{[n]}\otimes \prod_{k=1}^{L} \bigotimes_{(i,j) \in \text{pairs}[k]} U_{ij}^{(k)} \right) \mathit{W}^{\otimes n},  \\
  C_1' & := & \left( I_{[n]}\otimes\prod_{k=1}^{T} \bigotimes_{(r,t) \in \text{pairs}[k]} V_{rt}^{(k)} \right) \mathit{W}^{\otimes n}.
\end{array}
\]
 This problem is solved in
 \Cref{sec:EquivalenceCheckingOne}.

Note
that inserting the $\mathit{W}$ gates increases the circuit depth by at most 1 because $W$
is a 2-qubit gate.
Consequently, the application of $W^{\otimes n}$ just augments the circuit with a single layer.
 
\section{Efficient Runtime Assertions for Testing and
Debugging}
\label{Se:AssertionChecking}

In this section, we employ the efficient description of the output of a shallow circuit from \Cref{Se:AnEfficientDescription} as an efficient scheme for assertion checking, static and run time, respectively.


\subsection{Efficient Verification of Assertions}\label{veri}

Given a quantum circuit and a tuple of local projections, can we efficiently verify whether the circuit's outcome $|\psi_{\text{final}}\rangle = \left( \prod_{k=1}^{L} \bigotimes_{(i,j) \in \text{pairs}[k]} U_{ij}^{(k)} \right) |0^{\otimes n}\rangle$ satisfies the conjunction of the local-projection assertions $(Q_{l_1},\cdots,Q_{l_m})$?
That is,
can we efficiently test the property
``$|\psi_{\text{final}}\rangle ~\mathrm{satisfies}~ Q_{l_i}\otimes I_{[n]\setminus l_i}, ~\mathrm{for~all}~ i$''?
We answer this question affirmatively for shallow circuits.

Before doing so, let us make an initial attempt: suppose that we employ the outputs $(P_{s_1},\cdots,P_{s_m})$ of Algorithm \ref{alg:description}, which uniquely identifies $|\psi_{\text{final}}\rangle $.
We then need to check whether
\[
\bigcap_{j}(P_{s_j}\otimes I_{[n]\setminus s_j})\subseteq Q_{l_i}\otimes I_{[n]\setminus l_i} \ \ \forall \ \ i.
\]
Unfortunately, it is not clear how to check this condition efficiently with a classical computer.

{
The
}
solution is to work \emph{backwards}.
That is, instead of working forwards using Algorithm \ref{alg:description} to compute a description of $|\psi_{\text{final}}\rangle$, we
work backwards to apply the inverse of the circuit on the given assertion $(Q_{l_1},\cdots,Q_{l_m})$ and check whether the initial state $\ket{0}^{\otimes n}$ satisfies a set of derived local-projection assertions. According to Lemma \ref{lem:intersection} and Lemma \ref{lem:extension-with-identity-matrix}, we have
\[
\begin{array}{@{\hspace{0ex}}r@{\hspace{0.75ex}}l@{\hspace{0ex}}}
  &|\psi_{\text{final}}\rangle ~\mathrm{satisfies}~ Q_{l_t}\otimes I_{[n]\setminus l_t}, \ \ \forall \ \ t \\
  \Longleftrightarrow &\ket{0}^{\otimes n} ~\mathrm{satisfies}~  \left( \prod_{k=1}^{L} \bigotimes_{(i,j) \in \text{pairs}[k]} U_{ij}^{(k)} \right) ^{\dag}(Q_{l_t}\otimes I_{[n]\setminus l_t})\left( \prod_{k=1}^{L} \bigotimes_{(i,j) \in \text{pairs}[k]} U_{ij}^{(k)} \right), \forall \ \ t\\
  \Longleftrightarrow &\ket{0}^{\otimes n} ~\mathrm{satisfies}~  Q_{v_t}\otimes I_{[n]\setminus v_t},  \forall \ \ t \\
  \Longleftrightarrow &\ket{0^{\otimes |v_t|}} ~\mathrm{satisfies}~ Q_{v_t},\forall \ \ t. 
\end{array}
\]
Here, $Q_{v_t}\otimes I_{[n]\setminus v_t}$ denotes $\left( \prod_{k=1}^{L} \bigotimes_{(i,j) \in \text{pairs}[k]} U_{ij}^{(k)} \right) ^{\dag}(Q_{l_t}\otimes I_{[n]\setminus l_t})\left( \prod_{k=1}^{L} \bigotimes_{(i,j) \in \text{pairs}[k]} U_{ij}^{(k)} \right)$.
Similar to the proof of Theorem \ref{main}, $Q_{v_t}$ can be computed efficiently from local projection $Q_{l_t}$, where each $l_t$ only involves a constant number of qubits (independent of $n$).
The checking of whether $\ket{0^{\otimes |v_t|}}$ satisfies $Q_{v_t}$ is immediate. 

\subsection{Run-Time Assertions}
\label{sec:RunTimeAssertions}

We first recall a previous work Proq \cite{li2019proq}, which defines assertions using general projections and proposes efficient implementation strategies by deriving local projections from general projections. 

\begin{definition}[\textbf{Syntax} and \textbf{Semantics} of the assertion in \cite{li2019proq}]  \label{semantics}
The \textbf{syntax}  of the quantum assertion is defined as:
$\ass(\overline{q};P)$ where $\overline{q}=q_1,...,q_n$ is a collection of quantum variables and $P$ is a projection in the state space $\cH_{\overline{q}}$. The \textbf{semantics} of an $\ass$ statement is defined as follows:
\[
\begin{array}{@{\hspace{0ex}}r@{\hspace{0.75ex}}l@{\hspace{0ex}}}
    \ass(\overline{q};P)\equiv & \ \mathbf{if}\ M_{P}[\overline{q}] =m_0\rightarrow \ \mathbf{skip} \\
                 &\ \square \hspace{1.4cm} m_1\rightarrow \ \mathbf{abort} \\
                 &\ \mathbf{fi}
\end{array}
\]
 where $M_{P} = \{M_{m_0} = P, M_{m_1} = I_{\cH_{\overline{q}}}-P\}$ and an auxiliary notation $\mathbf{abort}$ is employed to denote that the program terminates immediately and reports the termination location.
\end{definition}
That is, one can construct a projective measurement $M_{P} = \{M_{m_0} = P, M_{m_1} = I_{\mathcal{H}_{\overline{q}}} - P\}$ and apply it to the qubit collection $\overline{q}$.
If the state satisfies the predicate, the measurement outcome is \( m_0 \), and the program proceeds.  
This
mechanism relies on an
important property of projection: a successful projective measurement does not disturb a quantum state that lies within the corresponding subspace.
If the outcome is $m_1$, the state does not satisfy the predicate, and the program terminates, reporting the termination location. However, this approach \cite{li2019proq} faces scalability challenges, both in its classical and quantum components. It requires the classical computation of the exponential size projection and the execution of corresponding projective measurements via complex unitary circuits, which could result in exponential-time costs on a quantum computer.

Before presenting a general method for designing and implementing scalable assertions, we define the following semantics for commuting local projections.

\begin{definition}[\textbf{Semantics for commuting local projections as
an assertion}]
For a tuple of pairwise-commuting local projections $\mathcal{P}=(P_{s_1},\cdots,P_{s_m})$, the \textbf{semantics} of ``$\ass(\overline{q};\mathcal{P})$'' is
\[
    \ass(\overline{q};\mathcal{P}) \equiv \ass(\overline{q};P_{s_1}),\cdots, \ass(\overline{q};P_{s_m}).
\]
where the assertion $\ass(\overline{q};P_{s_1})$ for each component is defined as in \Cref{semantics}.
\end{definition}
Clearly, $\ass(\overline{q};\mathcal{P})$ implements a two-outcome measurement, where---by \Cref{lem:commuting-2}---the result $m_0$ corresponds to the operator
$
  \Pi_{i} (P_{s_i}\otimes I_{[n]\setminus s_i})
  =
  \bigcap_i P_{s_i}\otimes I_{[n]\setminus s_i}.
$
Furthermore, we can see that the order of invocation of the set of operators $P_{s_i}$ does not affect $M_{\mathcal{P}}$.

The commuting condition is necessary because one can easily construct \( P_{1,2} \) and \( P_{2,3} \) such that \( P_{1,2}P_{2,3} \neq P_{2,3}P_{1,2} \), which could introduce ambiguity in the definition.

\subsection{NISQ-Device Verification}
\label{sec:NISQDeviceVerification}

Given a NISQ device that claims to implement a shallow circuit $|\psi_{\text{final}}\rangle = \left( \prod_{k=1}^{L} \bigotimes_{(i,j) \in \text{pairs}[k]} U_{ij}^{(k)} \right) |0^{\otimes n}\rangle$, we can design a verification scheme to assess its implementation.

We first compute the efficient description 
$\mathcal{P}:=(P_{s_1},\cdots, P_{s_m})$
developed in \Cref{Se:AnEfficientDescription}. 
The key feature is that for shallow circuits, the description
$(P_{s_1},\cdots, P_{s_m})$
satisfies $P_{s_i}\otimes I_{[n]\setminus s_i}$ with pairwise-commuting
projections $P_{s_i}$ (see \Cref{main}).
When we use the above semantics for computing projections, and obtain the description 
$(P_{s_1},\cdots, P_{s_m})$,
each correct run of the hardware implementation of the circuit will produce a quantum state $|\psi_{\text{final}}\rangle$ for which the following holds:
\begin{equation}
  \label{Eq:PropertyOfCorrectRun}
  \bigcap_i P_{s_i}\otimes I_{[n]\setminus s_i}=\op{\psi_{\text{final}}}{\psi_{\text{final}}}.
\end{equation}
In other words, the test $M_{P}[\overline{q}] = m_0$ in the code given in \Cref{semantics} can be performed by checking whether \Cref{Eq:PropertyOfCorrectRun} holds---which can be directly checked via a small number of local-projective measurements, rather than a giant projection, without loss of any precision except those caused by statistical fluctuations.

With the local-projection description, we can insert the assertion $\ass(\overline{q};\mathcal{P})$ at the end of the program. If an error message occurs in $\ass(\overline{q};\mathcal{P})$,
we conclude that the implementation is not correct. If no error message is reported after executing the program for many times, we claim that implementation is close to bug/noise-free.

%

Recall that for the example from \Cref{Se:Overview}, we obtain the following 
local-projection description
$\mathcal{P}$ $= (P_{1,2,3,4}^{(3)},P_{1,2,3,4,5,6}^{(3)},P_{3,4,5,6,7,8}^{(3)},P_{5,6,7,8}^{(3)})$.
To verify the correctness of a run of an implementation of the computation in hardware---i.e., to check $\ass(\overline{q};\mathcal{P})$---we
only need to perform four measurements, each involving at most six qubits.
For a 100-qubit variant of the circuit from \Cref{Se:Overview}, we would have to perform fifty measurements, but each would still involve at most six qubits.
\vspace{-3mm}
\section{Evaluation}
\label{sec:Experiments}

The goal of our evaluation was to answer the following questions:

\begin{description}
  \item[RQ1] How effective is the constraint-based approach to computing (a characterization of) a shallow quantum circuit's output state, compared with state-vector simulation?
  \item[RQ2] How effective is the constraint-based approach for equivalence  
  and inequivalence checking,
  compared with state-vector simulation?
  To what degree does floating-point error arise?
\end{description}

\textbf{Implementation}.
Our implementation consists of about 2,000 lines of Python, using the Qiskit library (Version 1.2.4).
The Qiskit interface was used to handle quantum-circuit descriptions and state-vector simulation.
For the local-projection-based method, matrix multiplication is performed using the standard matrix operations provided by NumPy (Version 2.1.1.).

\textbf{Benchmarks and benchmarking.}
All benchmarks used are random circuits, where every 2-qubit gate is generated from a Haar distribution, created for a specified number of qubits $n$ and a specified depth $d$.
All numbers reported are the average running time for 100 random circuits (one run per circuit).
For the equivalence-checking experiments, each run used two copies of the same random circuit.
For the inequivalence-checking experiments, each run used two different random circuits.
(It would be highly unlikely that two such circuits would perform the same computation.)

All experiments were run on a server equipped with 2× Intel\textsuperscript{\textregistered} Xeon\textsuperscript{\textregistered} Gold 6338 CPUs (128 threads total) and 1.0~TiB of RAM, running Ubuntu 20.04.6 LTS. State-vector simulation was performed using Qiskit Aer (Version~0.17.0) under noiseless conditions.
Due to memory constraints and to ensure computational stability, all experiments involving explicit state-vector simulations were limited to circuits with at most 34 qubits.

\textbf{RQ1: Efficiency of local-projection descriptions versus explicit state-vector simulation (\Cref{fig:Running_times_depth_changes}).}
In the first experiment, we fixed the depth at $d=6$ and varied the number of qubits from $10 \leq n \leq 200$ for local-projection descriptions, and from $10 \leq n \leq 34$ 
for state-vector simulation.
As shown in \Cref{fig:fixed_depth_changed_n_qubit}, the time taken using local-projection descriptions increases
roughly
linearly in $n$, whereas explicit state-vector simulation increases exponentially in $n$.

\begin{figure}[!tb]
    \centering
   []{
        \includegraphics[width=0.45\linewidth]{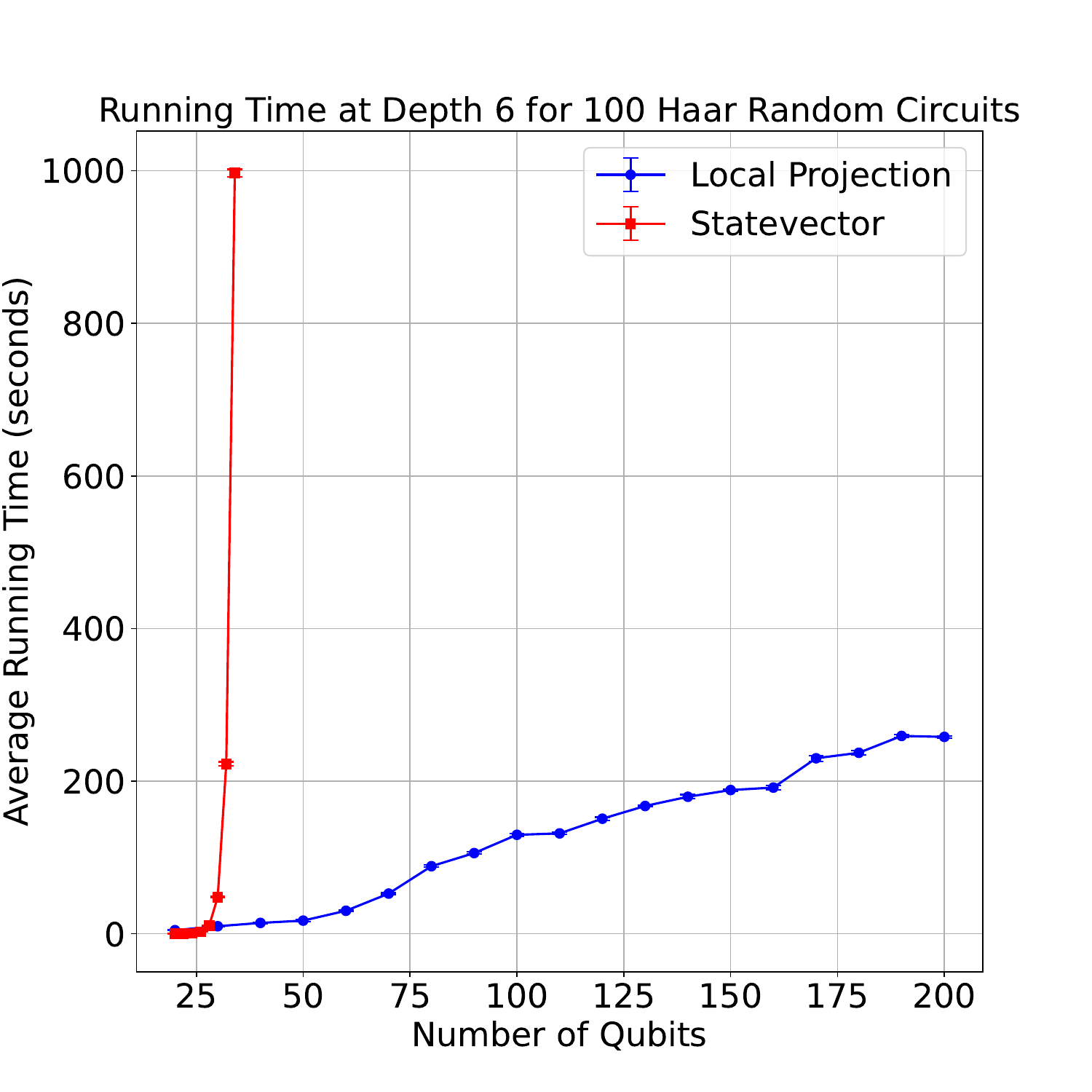}
        \label{fig:fixed_depth_changed_n_qubit}
    }\hspace{5mm}
   []{
        \includegraphics[width=0.45\linewidth]{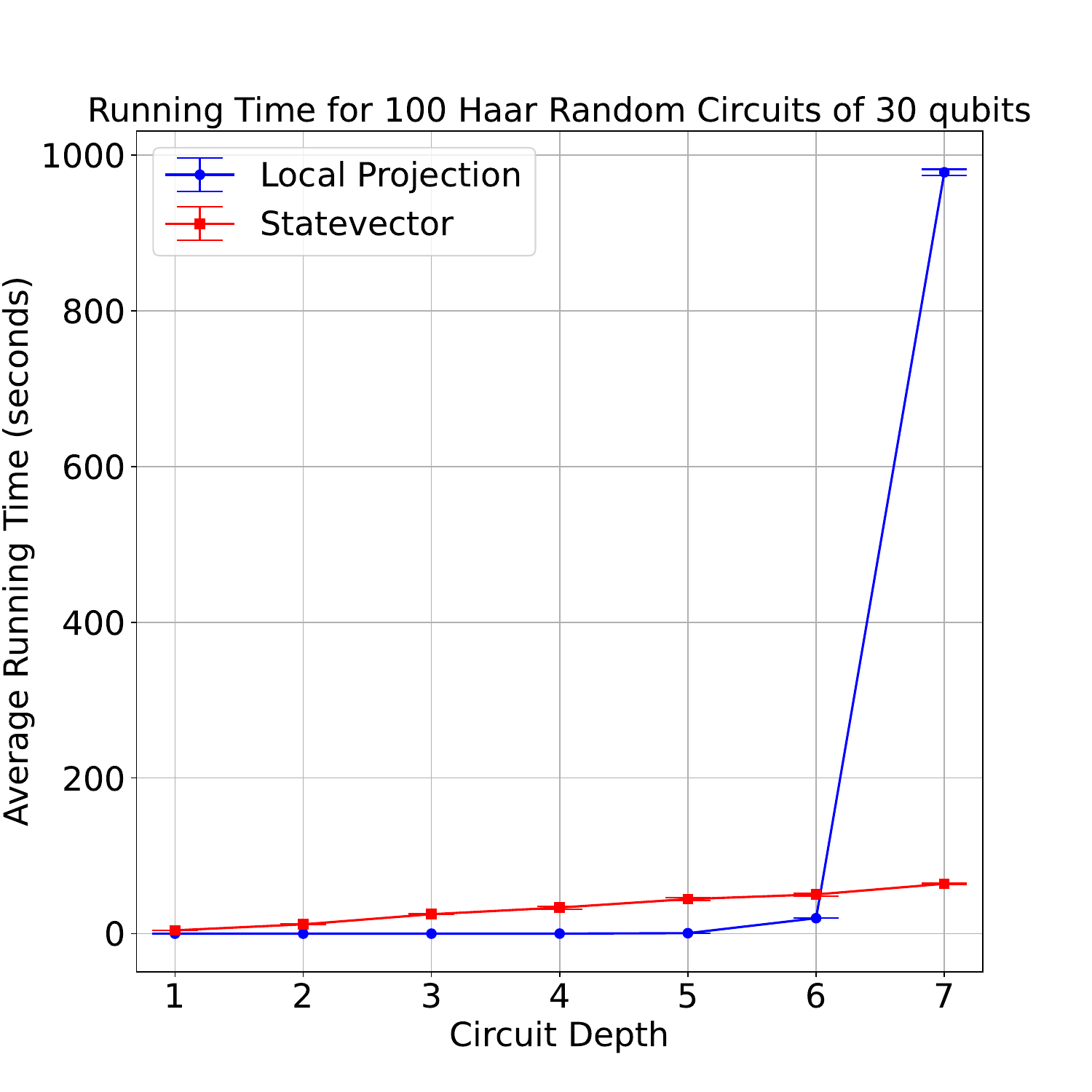}
        \label{fig:fixed_number_of_qubits}
    }
    \vspace{-1.0em}
    \caption{ Running times for computing local-projection descriptions and explicit state vectors. (a) Depth is fixed at $d=6$. (b) The number of qubits is fixed at $n=30$. 
    Each point is averaged over 100 random circuits.
    Error bars
    (visible when the figure is magnified) indicate one
    standard deviation.
   }
   \label{fig:Running_times_depth_changes}
\end{figure}

In the second experiment, we fixed the number of qubits at $n=30$, and compared the running times for computing the local-projection description and the explicit state vector as a function of circuit depth $d$.
As we would expect, \Cref{fig:fixed_number_of_qubits} shows that the time for computing explicit state vectors grows linearly in $d$.
The time for computing local-projection descriptions increases exponentially with $d$ due to the increasing size of each projection.
The results for $d=8$---not included in \Cref{fig:fixed_number_of_qubits} so as not to swamp the $y$-axis scale---were as follows: 42,523 seconds ($\approx$11.8 hours) to compute the local-projection description, and 78.1 seconds to compute the state vector.


\textit{Findings}. Our experiments confirm the expected result that for an $n$-qubit shallow circuit of depth $d$, explicit state-vector simulation scales exponentially with $n$ and linearly with $d$ due to the Hilbert-space size and gate layers.
In contrast, the local-projection computation scales linearly with $n$ and increases exponentially with $d$, because the number of required projections grows linearly with $n$, and the sizes of the projection matrices increase exponentially with $d$.
The experiments demonstrate the efficiency of local-projection descriptions as a representation of the output of a shallow quantum-circuit, suitable for use on classical computers.

\textbf{RQ2: Equivalence and inequivalence checking.}
To check equivalence of two circuits $C_1$ and $C_2$ under \Cref{equi-w}, we computed the set of local projections $\{ P^{12}_i \}$ of the composite circuit $C_1^\dagger C_2$, and checked if $\ket{0}^{\otimes n} \in \cap_i P_i^{12}$.
To check equivalence under \Cref{equi-s}, we implemented the method from \Cref{sec:EquivalenceCheckingTwo}.
%

\begin{figure}[!tb]
    \centering
    \includegraphics[width=0.8\linewidth]{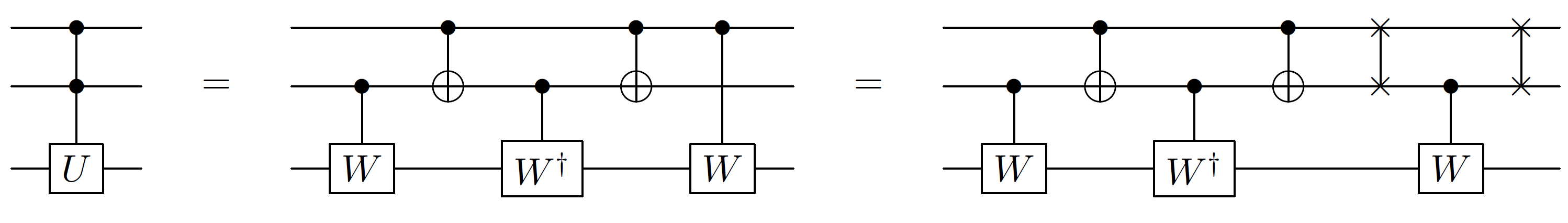}
    \vspace{-1.0em}
    \caption{Three implementations of a
    controled-controlled-$U$
    gate \cite{NI11}, where $W^2=U$.}
    \label{fig:ccu}
\end{figure}

\begin{figure}[!tb]
    \centering
    \includegraphics[width=0.5\linewidth]{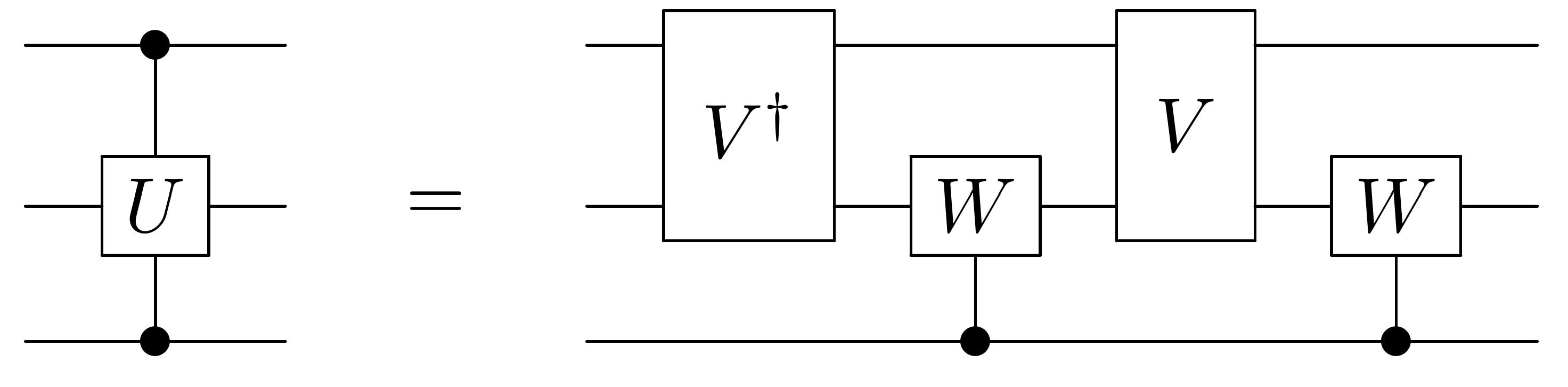}
    \vspace{-1.0em}
    \caption{For \( U = \text{diag}(e^{-i\theta}, e^{i\theta}) \), we have a depth-4 implementation of a controlled-controlled-$U$ gate \cite{PhysRevA.91.032302}, where \( W^2 = U \),
    and 2-qubit gate $V$ is defined by $V\ket{00} = \ket{01}$, $V\ket{01} = \ket{00}$, $V\ket{10} = \ket{10}$, and $V\ket{11} = \ket{11}$.
}
    \label{fig:4 Gate}
\end{figure}

\textit{Two micro-benchmarks: {equivalence checking of 3-qubit gates}.}
The two micro-benchmarks are based on the $3$-qubit circuits shown in \Cref{fig:ccu,fig:4 Gate}.
(In each figure, we refer to the left-hand-side circuit as $C_1$ and the right-hand-side circuit as $C_2$.) For circuits with depth comparable to qubit count, our equivalence-checking methods apply but provide no computational advantage over standard simulation. (In a 1D circuit, the final layer’s light cone can contain up to \( 2d \) qubits; hence, for circuits with few qubits, the projection matrices may cover all qubits.
)
To emulate a larger-scale example, we embedded $C_1^\dagger C_2$ in a 20-qubit circuit whose structure is similar to \Cref{fig:circuit_1}, in which all gates other than the ones that implement $C_1^\dagger$ and $C_2$ are 2-qubit identity gates.
Our implementation does not treat identity gates specially, so the performance would be comparable to checking the equivalence of (i) a 20-qubit circuit $C$---with gates other than identity gates---that contains $C_1$ as a sub-circuit, and (ii) a circuit $C'$ that is identical to $C$, except that $C_1$ is replaced by $C_2$. 

\textit{(1) Deutsch gate:}
The left-hand circuit $C_1$ in \Cref{fig:ccu} denotes a 3-qubit ``controlled-controlled-$U$'' ($CC\text{-}U$) gate \cite{deutsch1985quantum}.
(It is a 1D circuit when 3-qubit gates are permitted.)
The middle circuit implements $CC\text{-}U$ using only 2-qubit gates, where $W$ is a unitary that satisfies $W^2=U$.
However, this circuit is not a 1D circuit 
(when one is restricted to 2-qubit gates)
because the final occurrence of $W$ uses qubits 1 and 3.
The right-hand circuit $C_2$ is a 1D circuit that uses two swap gates to emulate the middle circuit;
its depth is 7, but the second $\textit{CNOT}$ and the adjacent swap can be merged into a single 2-qubit gate to create a depth-6 circuit.
The composite circuit $C_1^\dagger C_2$ has depth 7.

We used a Haar random matrix for $W$ (which was squared to obtain $U$), and embedded $C_1^\dagger C_2$ into a 20-qubit, depth-7 circuit in the style of \Cref{fig:circuit_1}.
Checking weak equivalence took about 27 minutes;
checking strong equivalence took about 119 minutes.
Both checks succeeded.

\textit{(2) $D(-\theta,\theta)$ gate:}
For the $CC\text{-}U$ gate with \( U = \text{diag}(e^{-i\theta}, e^{i\theta}) \), we checked the equivalence of the circuits $C_1$ and $C_2$ shown on the left-hand and right-hand sides of \Cref{fig:4 Gate}, respectively.
$C_1^\dagger C_2$ has depth 5, and $C_1$ involves a 3-qubit gate.
We again used a 20-qubit circuit in the style of \Cref{fig:circuit_1} in which we embedded $C_1^\dagger C_2$.
The running time was around 1 second for checking both weak equivalence and strong equivalence.
Both checks succeeded.


\begin{figure}[!tb]
    \centering
    
   []{
        \includegraphics[width=0.44\linewidth]{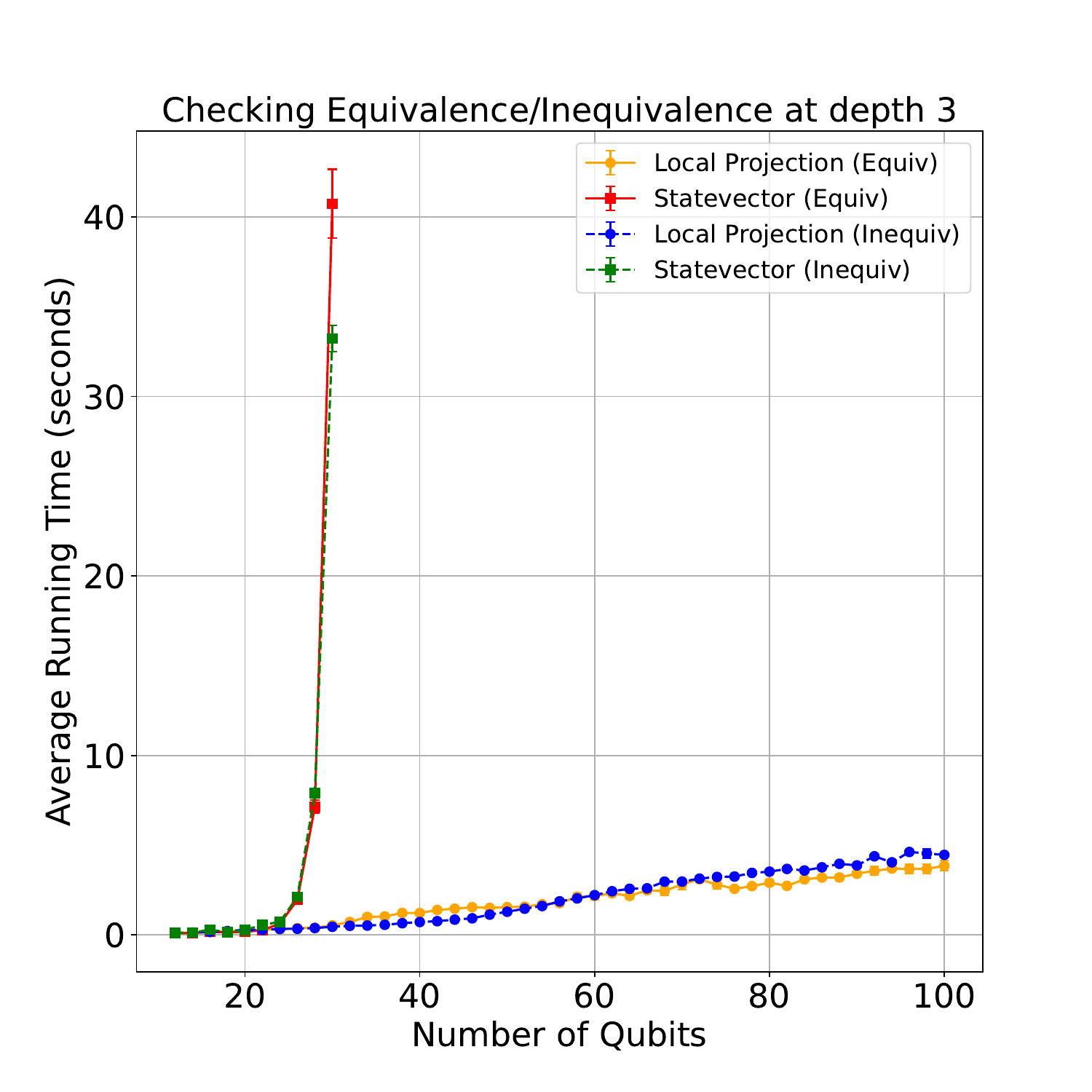}
        \label{fig:running_time_checking_equivalence}
    }\hspace{5mm}
   []{
        \includegraphics[width=0.44\linewidth]{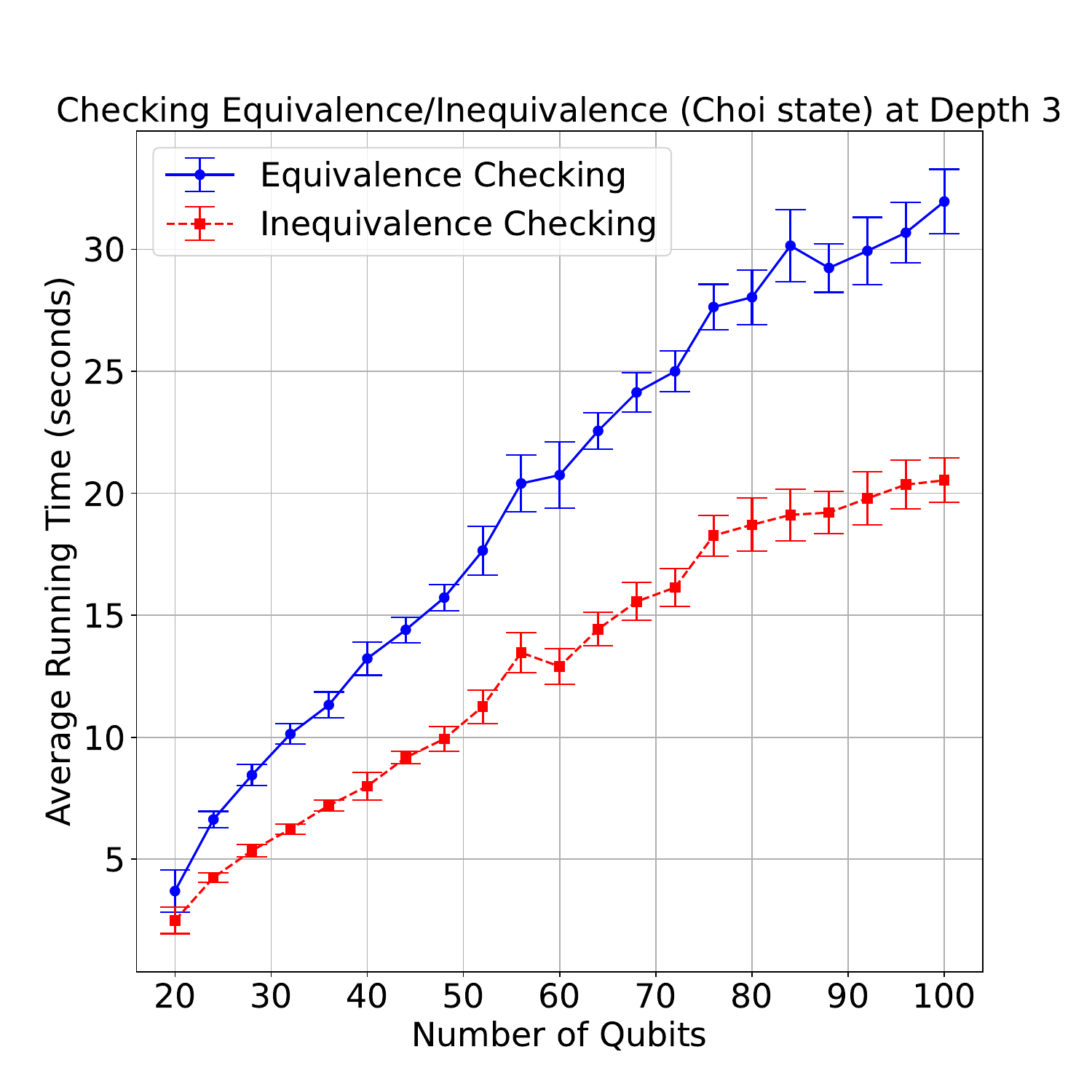}
        \label{fig: Choi approach}}
    \vspace{-1.0em}
    \caption{
    Graphs of running time as a function of number of qubits for (a) checking equivalence and inequivalence under \Cref{equi-w} (i.e., fixed input state $\ket{0}^{\otimes n}$), and (b) checking equivalence/inequivalence under \Cref{equi-s}.
    Error bars (visible in (a) when the figure is magnified) indicate one
    standard deviation.
    }
    \label{fig: equivalence checking all}
\end{figure}

\begin{figure}
    \centering
    \includegraphics[width=0.44\linewidth]{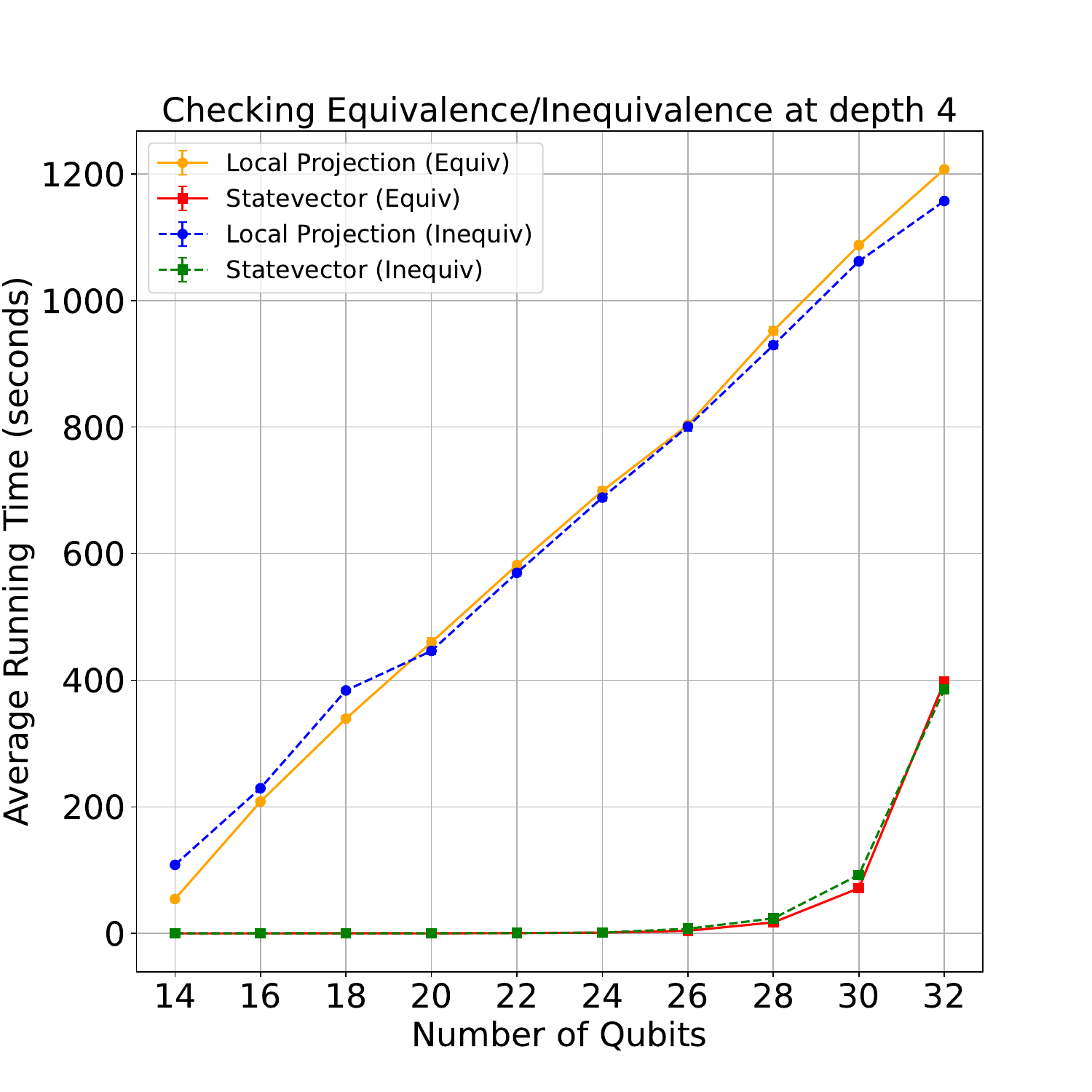}
    \caption{Checking equivalence/inequivalence of depth-4 circuits under \Cref{equi-w} (weak equivalence).
    }
    \label{fig:depth-4-euivalence checking}
\end{figure}

\textit{Equivalence and inequivalence checking on larger circuits (\Cref{fig: equivalence checking all}).}
In these experiments, we used
Haar random
circuits of depth 3, giving $U_1^\dagger U_2$ a depth of 6.
\Cref{fig:running_time_checking_equivalence} compares run times
both for runs when $U_1 = U_2$ (i.e., equivalence checking), and for runs when $U_1 \neq U_2$ (i.e., inequivalence checking).
For both kinds of problems, the local-projection-based method exhibits roughly linear growth in run time as a function of number of qubits, proving far more efficient than the state-vector approach, which has exponential growth (capped at $n=30$
in our experimental setup).




\Cref{fig: Choi approach} shows running times for checking equivalence and inequivalence in the sense of \Cref{sec:EquivalenceCheckingTwo} between circuits \( U_1 \) and \( U_2 \) at depth 3 (giving \( U_1^\dagger U_2 \) a depth of 6). The runtime scales approximately linearly with the number of qubits.

For each circuit-equivalence example in the results presented in \Cref{fig: equivalence checking all}, we chose a random circuit $U$ and tested the equivalence of $U$ with itself,
by applying our technique to the circuit for $U^\dagger U$.\footnote{
  The time for the $n=30$ case reported in \Cref{fig:running_time_checking_equivalence} is less than 1 second, which is much less than the $\sim$20-second times reported in \Cref{fig:fixed_depth_changed_n_qubit} for obtaining the local-projection description of a depth-6 circuit.
  The difference is explained by the fact that
  \Cref{fig:running_time_checking_equivalence} reports times for depth-6 circuits of the special form $U^\dagger U$.
  The support of the gates at the last layer of $U$ and the first layer of $U^\dagger$ is the same, which means that in the first layer of $U^\dagger$ the light cone does not become wider;
  consequently, the effective depth of $U^\dagger U$ is 5.
  The time is thus consistent with the $d=5$ case from \Cref{fig:fixed_depth_changed_n_qubit} for the local-projection approach (0.63 seconds). For similar reasons, equivalence/inequivalence checking of depth-4 circuits of the form $U^\dagger U$ only involves circuits of effective depth 7.
  The results are shown in \Cref{fig:depth-4-euivalence checking}.
  The limitations of our hardware platform for computing state vectors does not allow \Cref{fig:depth-4-euivalence checking} to show the crossover point at which the local-projection approach becomes better than the state-vector approach.
}
When performed over complex numbers, the combined circuit implements the identity function.
However, because our implementation uses floating-point arithmetic, and because the matrix multiplications performed for the $U^\dagger$ portion of the circuit are different from those performed for the $U$ portion, floating-point errors naturally arise.
We could have tested the equivalence-checking methods on examples of the form $U_2^\dagger U_1$, where $U_1$ and $U_2$ are different but equivalent circuits (as we did with the micro-benchmarks);
however, examples of the form $U^\dagger U$ provide an experimental control on the possibility that differences between $U_1$ and $U_2$ contribute to floating-point error, revealing empirically the \emph{intrinsic} amount of error that arises with our method.\footnote{
  Equivalence tests of the form $U^\dagger U$ share similarities with the well-known randomized benchmarking technique for hardware quantum gates \cite{Knill_2008}, which can be used to characterize the average error rate caused by noise (while being robust to state-preparation and measurement errors).
  Randomized benchmarking involves applying a sequence $U_1,\cdots, U_m$ of gates chosen randomly from a specified set of quantum gates, followed by the corresponding inverse gates $U_m^{\dag},\cdots, U_1^{\dag}$, and then measuring how well the system returns to the known initial state $\ket{0}^{\otimes n}$.
  Due to hardware noise, denoted by a set of operations $\{ \Lambda_i \}$, one has an ensemble of possible states, so the actual situation before measurement is described by a computation on reduced density matrices $\Lambda_{2m}\mathcal{U}_1^{\dag}\Lambda_{2m-1}\cdots \Lambda_{m+1} \mathcal{U}_m^{\dag}\Lambda_{m} \mathcal{U}_m\Lambda_{m-1}\cdots \mathcal{U}_1\Lambda_0(\op{0^{\otimes n}}{0^{\otimes n}})$, where $\mathcal{U} = \lambda x.U x U^{\dag}$ denotes the quantum operation corresponding to $U$.
  In our experiments, the floating-point errors incurred at each step are similar to the hardware-noise operations $\{ \Lambda_i \}$.
}


\begin{figure}[!tb]
  \begin{minipage}[c]{0.6\textwidth}
    \includegraphics[width=\linewidth]{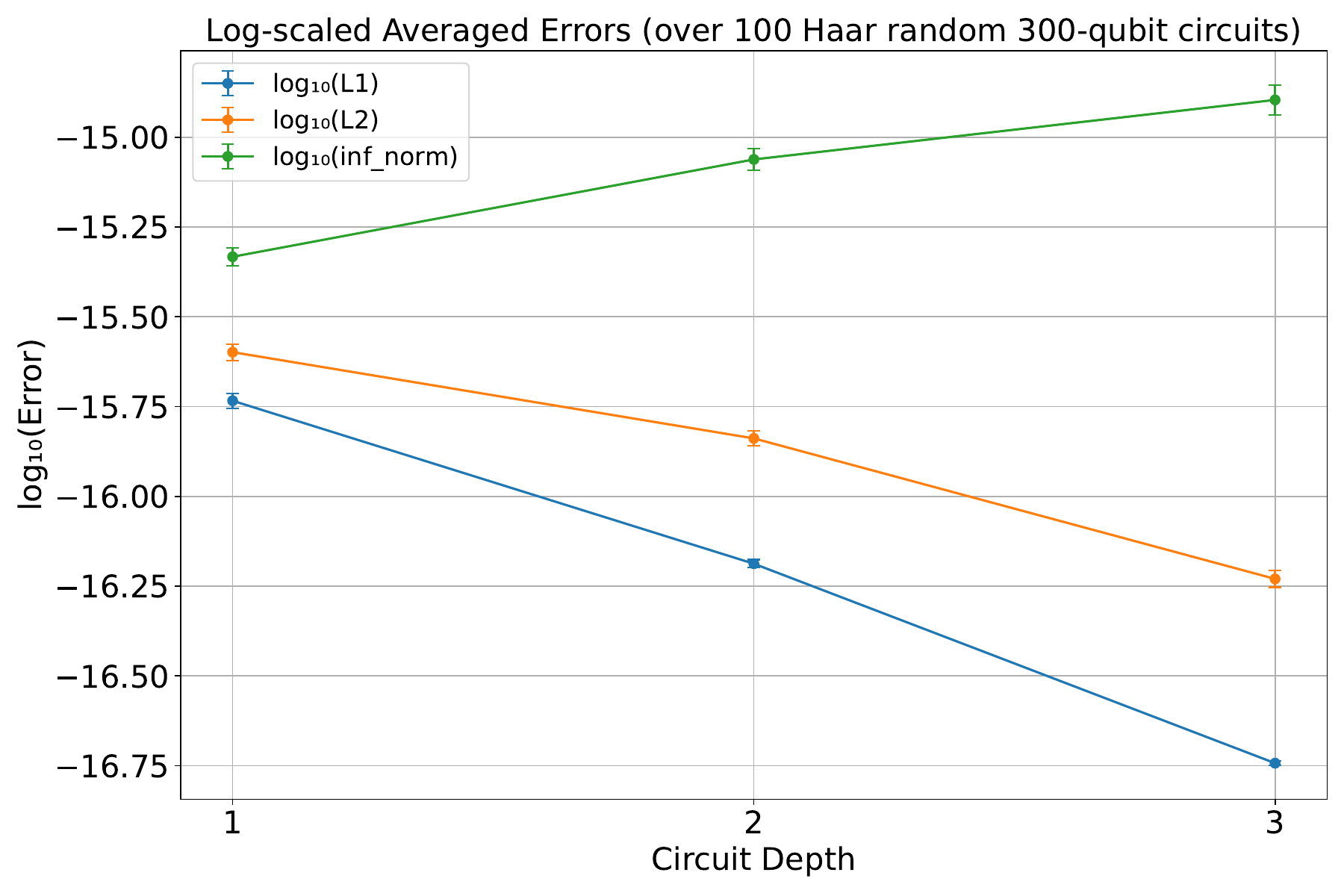}
  \end{minipage}\hfill
  \begin{minipage}[c]{0.37\textwidth}
    \caption{
      Log-scale plots of averages and 1-standard-deviation error bars (visible when the figure is magnified) for the three ``Averaged Error'' measures for a circuit $C$ discussed in the text, for 100 Haar random 300-qubit circuits.
      The depth of $C$ ranged from 1 to 3---and thus the depth of $C^\dagger C$ was 2, 4, or 6.
    } \label{fig:enter-label}
  \end{minipage}
\end{figure}

\textbf{Effects of floating-point error (\Cref{fig:enter-label}).}
For a local-projection matrix $P_{s_i}$ of size $N_i \times N_i$ at the final step of equivalence checking, an $N_i$-dimensional error vector can be
defined as $E = P_{s_i}\ket{0^{\otimes n}}-\ket{0^{\otimes n}}$.
The following standard vector norms provide three error metrics for $P_{s_i}$:
{\small
\begin{align*}
    L_1(E) = \frac{1}{N_i}\sum_{j=1}^{N_i}|E_j|,\ \ L_2(E) = \sqrt{\frac{1}{N_i}\sum_{j=1}^{N_i}|E_j|^2},\ \ L_{\infty} = \max_{j=1,2,...,N_i}|E_j|.
\end{align*}
}
For each circuit $C$, we computed the tuple of local-projection matrices $\mathcal{P} = (P_{s_1},\cdots,P_{s_m})$ for the circuit $C^\dagger C$.
For each matrix $P_{s_i}$, we then computed $E_i = P_{s_i}\ket{0^{\otimes n}}-\ket{0^{\otimes n}}$.
We then computed three norms: $L_1(E_i)$, $L_2(E_i)$, $L_{\infty}(E_i)$, and computed an average over the respective values obtained for $P_{s_i} \in \mathcal{P}$ to obtain three kinds of ``Averaged Error'' measures for circuit $C$.
We take average values in $L_1$ and $L_2$ because for one circuit, local-projection matrices $P_{s_i}$ and $P_{s_j}$ can be of different sizes: $N_i \times N_i$ versus $N_j \times N_j$, respectively.
Average values provide an aggregate measure of how far off the approximate local projections are from the exact solution.

\paragraph{Findings}
The experiments demonstrate the efficiency of local-projection descriptions for shallow quantum circuits as a representation to allow equivalence and inequivalence queries to be answered using a classical computer.
For both kinds of problems, the local-projection-based method exhibits roughly linear growth in run time as a function of number of qubits, whereas the state-vector approach exhibits exponential growth.

In our experiments with equivalence-checking problems of the form $U^\dagger U$, designed to elucidate the intrinsic amount of floating-point error incurred by our method, on average, the ``Average Error'' measure of floating-point error was miniscule---i.e., $\leq {10}^{-15}$.


\section{Related Work}
\label{Se:RelatedWork}


We first compare our method with quantum abstract interpretation \cite{YP21}.
Both methods address the scalability challenge posed by the exponential complexity of classical representations of quantum systems.
At a high level, our approach shares the same foundational idea used by \citeauthor{YP21}: use tuples of local projections to analyze quantum programs, offering a potential pathway to overcome the exponential barrier. 
However, our method differs significantly in several ways and provides multiple advantages:
it automatically selects/adjusts the representation domain, whereas 
\citeauthor{YP21} use a fixed domain;
it is much faster;
it is complete
for a well-defined class of predicates;
and it has applications for checking quantum-circuit equivalence and run-time assertion checking.
For a more detailed comparison, see \Cref{Se:ComparisonWithQuantumAbstractInterpretation}.

{Only a few pieces of previous work prove results about a reasoning method for quantum computing that is complete for some class of assertions.
Examples include Ying \cite{Ying11} and Zhou et al.\ \cite{ZYY19}.
These completeness results rely on classical computation with exponential cost, using $n$-qubit positive semi-definite matrices and projections as predicates, respectively. Their scalability for specific circuit classes, such as shallow circuits, remains unclear.}

Ji and Wu proved a strong result: it is QMA-hard to determine whether a given shallow circuit $U$ satisfies $D(U, I) < a$ or $D(U, I) > b$ for the diamond norm $D(\cdot,\cdot)$, $0 < a < b$ with $b - a > 1/\text{poly}(n)$ and some polynomial \cite{ji2009nonidentitycheckremainsqmacomplete}. Our equivalence-checking result presents a surprising counterpart: one can efficiently distinguish $D(U, I) = 0$ from $D(U, I) > b$ even with a classical computer.


{\citet{10.1145/3618260.3649638} proposed a quasipolynomial-time classical algorithm for approximate sampling from peaked constant-depth circuits. \citet{PhysRevX.12.021021} showed that some constant-depth 2D random circuits allow linear-time approximate simulation despite their universality. However, these results do not apply to our exact equivalence setting.}


For runtime analysis, \citet{li2019proq} used general projections as assertions and proposed computing local projections from the obtained general projections. The main bottleneck there is the efficient representation of general projections as assertions and the imprecise assertion checking due to the non-commutativity of local projections. Our result overcomes these issues for shallow circuits.
\section{Conclusion}
\label{sec:Conclusion}

The paper describes how to create a constraint-based description of the output state of a constant-depth quantum circuit, and how to apply this result to two kinds of equivalence-checking problems and two kinds of assertion-checking problems.
The results are surprising because, in general, constant-depth quantum circuits cannot be accurately simulated on a classical computer \cite{terhal2004adaptivequantumcomputationconstant}.

Although our method represents the first viable approach to solving significant verification challenges that are otherwise intractable with classical computation, it may still incur high costs in some cases. Integrating our approach with recent advances in symbolic quantum simulation using
variants of Binary Decision Diagrams (BDDs) \cite{DBLP:journals/tcad/ZulehnerW19,Book:ZW2020,DBLP:journals/quantum/VinkhuijzenCEDL23,10.1145/3651157} might greatly enhance performance for circuits built from standard gate sets (e.g., Clifford + T)---possibly allowing our techniques to be applied to constant-depth circuits of much greater depth than the ones used in the experiments in \Cref{sec:Experiments}.
Our results pave the way for several other directions of future research, including
\begin{enumerate}
  \item 
    \label{It:CombineWithQAI}
    combining our approach with quantum abstract interpretation (QAI) \cite{YP21} to enable precise, automated assertion checking for general quantum circuits
  \item 
    \label{It:LanguageExtensions}
    extending the approach to handle programs with limited measurements, conditionals, loops and Clifford + T gates \cite{DBLP:journals/pacmpl/YuanC24}
  \item
    \label{It:Noise}
    developing techniques for robust reasoning in the presence of noise, particularly for NISQ-era applications \cite{Wu19,DBLP:conf/pldi/TaoSYHCG21,DBLP:journals/pacmpl/YuanMC22}
  \item 
    \label{It:OptimizationApplication}
    applying our method to 
    support
    quantum-circuit optimizers and certified compilation, including faster circuit-equivalence verification \cite{10.1145/3519939.3523433} and verified optimizers \cite{DBLP:journals/pacmpl/Hietala0HW021}, to enhance the scalability and efficiency of quantum program analysis.
\end{enumerate}

To enable scalable reasoning beyond shallow circuits (item (\ref{It:CombineWithQAI})), we propose partitioning the circuit into two parts, \( C_1 \) and \( C_2 \), where \( C_1 \) remains shallow enough for our algorithm to handle efficiently.
The verification method would proceed as follows:  
\begin{itemize}
    \item
      Apply our algorithm to \( C_1 \), generating local projections that serve as the input predicate.  
    \item
      Use this predicate to perform QAI on \( C_2 \).  
\end{itemize}
This approach
could perform better than
directly applying QAI for two reasons:  
\begin{itemize}
  \item \textbf{Domain selection:}
    Our method automatically determines the ``reasoning domain''---i.e., specific tuples of local projections;
    QAI lacks an automatic mechanism for domain selection.
  \item
    \textbf{Precision of the predicate:} The predicate computed for \( C_1 \) is precise, and monotonicity properties ensure that this method yields better results than directly applying QAI.
\end{itemize}

For quantum while programs (item (\ref{It:LanguageExtensions})), we would need to identify efficient methods to compute loop invariants for while loops.
It might be possible to address this problem
by developing a (terminating) analysis of increasing chains of tuples of local projections.

This paper uses local projections as assertions for quantum circuits of general 2-qubit gates, whereas Clifford circuits rely on the Pauli basis for assertions.
To integrate these two techniques for reasoning about Clifford circuits with T-gates,
it might be possible
to develop a hybrid assertion language that seamlessly combines both approaches.
We note that there are some common parts where the local Pauli observables correspond to local projections \cite{10.1145/3656419}.

\bibliography{main}
\newpage

\appendix

\section{Proof of Theorem \ref{main}.}
\label{sec:ProofOfTheorem}

Theorem \ref{main}. 
For a shallow circuit output $|\psi_{\text{final}}\rangle = \left( \prod_{k=1}^{L} \bigotimes_{(i,j) \in \text{pairs}[k]} U_{ij}^{(k)} \right) |0^{\otimes n}\rangle$ with constant $L$, Algorithm \ref{alg:description}
outputs a tuple of local projections $(P_{s_1},\cdots,P_{s_m})$ that uniquely identifies $|\psi_{\text{final}}\rangle $ in the following sense:
\begin{align*}
\cap_{t=1}^n (P_{s_t} \otimes I_{[n]\setminus s_t})=\{\lambda\ket{\psi_{\text{final}}}| \lambda\in \mathbb{C}\}.
\end{align*} 
Furthermore, each set $s_t$ contains at most constant qubits, and the local projections $P_{s_1},\cdots,P_{s_m}$ pairwise commute. The execution of Algorithm \ref{alg:description} is linear in the depth. Specifically, the loop body needs to be executed \(L\) times for a depth-\(L\) circuit. For a shallow circuit, where \(L\) is a constant, the time complexity is linear in \(n\) because each iteration of the loop body takes time linearly in \(n\).

\begin{proof}
We prove the statement by induction on \(L\).

Algorithm \ref{alg:description} is deterministic, meaning its output depends solely on the input.

\paragraph{Base case:}
For a circuit with \(L=0\), we know that the system state is \(\ket{0}^{\otimes n}\). Algorithm \ref{alg:description} outputs \((P_{s_1}, \ldots, P_{s_m})\) with \(s_t = \{t\}\) and \(P_{s_t} = \op{0}{0}\). Each \(s_t\) contains only one qubit, and \(P_{s_t}\) uniquely identifies the state \(\ket{0}^{\otimes n}\).
The local projections $P_{s_1},\cdots,P_{s_m}$ pairwise commute.

\paragraph{Inductive step:}
Assume that the statement is true for any depth-\(L\) circuit. That is, for any depth-\(L\) circuit with output state \(\ket{\Phi}\), the algorithm will output \((P_{s_1}, \ldots, P_{s_m})\) such that \(|s_t| \leq 2^L\) and

\begin{align*}
\bigcap_{t=1}^n (P_{s_t} \otimes I_{[n] \setminus s_t}) = \{\lambda\ket{\Phi} \mid \lambda \in \mathbb{C}\}
\end{align*}
with the property that the local projections $P_{s_1},\cdots,P_{s_m}$ pairwise commute.

Let us prove the statement for a depth-\(L+1\) circuit. We apply Algorithm \ref{alg:description} on the circuit \(\left( \prod_{k=1}^{L+1} \bigotimes_{(i,j) \in \text{pairs}[k]} U_{ij}^{(k)} \right)\).

Consider the for loop just before the last iteration of ``\(\text{for } k \gets 1 \text{ to } L+1 \text{ do}\).''
The algorithm performs the analysis exactly as in the case of the depth-\(L\) circuit
\(\left( \prod_{k=1}^{L} \bigotimes_{(i,j) \in \text{pairs}[k]} U_{ij}^{(k)} \right)\)
on input \(\ket{0}^{\otimes n}\). Therefore, we can apply induction on \(k = L\). At that time, we know
 \((P_{s_1}^{(L)}, \ldots, P_{s_m}^{(L)})\) such that \(|s_t| \leq 2^L\) and
\begin{align*}
\bigcap_{t=1}^n (P_{s_t}^{(L)} \otimes I_{[n] \setminus s_t}) = \{\lambda\ket{\Phi_L} \mid \lambda \in \mathbb{C}\}
\end{align*}
where $\ket{\Phi_L}:=\left( \bigotimes_{k=1}^{L} \prod_{(i,j) \in \text{pairs}[k]} U_{ij}^{(k)} \right)\ket{0}^{\otimes n}$.

Now consider the final iteration of ``\(\text{for } k \gets 1 \text{ to } L+1 \text{ do}\)''---that is, when \(k=L+1\).
The \(L+1\)-layer unitary \(\bigotimes_{(i,j) \in \text{pairs}[k]} U_{ij}^{(L+1)}\) is considered. In this iteration, for each \(1 \leq t \leq n\), the algorithm checks \(U_{ij}^{(L+1)}\): if \(\{i, j\}\) and \(s_t\) have a nonempty intersection, this unitary will be marked. After iterating over all \(U_{ij}^{(L+1)}\), the algorithm updates \(s_t\) by including all qubits \(\{i, j\}\) with marked \(U_{ij}^{(L+1)}\), and applies the marked unitary on \(P_{s_t}\) to update it.

We first observe that in this single iteration, the number of qubits in \(s_t\) is at most doubled. The reason is as follows: In this layer, the two-element sets \(\{i, j\}\) are disjoint. For \(s_t\), there are at most \(|s_t|\) two-element sets \(\{i, j\}\) that have a nonempty intersection with \(s_t\).
As we union with \(s_t\) all qubits in the \(\{i, j\}\) sets that have a nonempty intersection with \(s_t\), each pair \(\{i, j\}\) brings at most one new element into \(s_t\).
Therefore, after the iteration, the number of qubits in \(s_t\) is at most doubled. By induction, we know that in the output \(|s_t| \leq 2^{L+1}\).

According to
\begin{align*}
\bigcap_{t=1}^n (P_{s_t}^{(L)} \otimes I_{[n] \setminus s_t}) = \{\lambda\ket{\Phi_L} \mid \lambda \in \mathbb{C}\},
\end{align*}
we have
\begin{align*}
\bigotimes_{(i,j) \in \text{pairs}[k]} U_{ij}^{(L+1)}\bigcap_{t=1}^n (P_{s_t}^{(L)} \otimes I_{[n] \setminus s_t}) {\bigotimes_{(i,j) \in \text{pairs}[k]} U_{ij}^{(L+1)}}^{\dag}= \bigotimes_{(i,j) \in \text{pairs}[k]} U_{ij}^{(L+1)}\{\lambda\ket{\Phi_L} \mid \lambda \in \mathbb{C}\}
\end{align*}
By Lemma \ref{lem:intersection}, we know 
\begin{align*}
\bigcap_{t=1}^n [\bigotimes_{(i,j) \in \text{pairs}[k]} U_{ij}^{(L+1)} (P_{s_t}^{(L)} \otimes I_{[n] \setminus s_t}) {\bigotimes_{(i,j) \in \text{pairs}[k]} U_{ij}^{(L+1)}}^{\dag}]= \{\lambda\ket{\Phi_{L+1}} \mid \lambda \in \mathbb{C}\}
\end{align*}
where $\ket{\Phi_{L+1}}:=\bigotimes_{(i,j) \in \text{pairs}[k]} U_{ij}^{(L+1)}\ket{\Phi_{L}}$ is the output state of the $L+1$-depth circuit.
If $\{i,j\}$ and $s_t$ are disjoint, the corresponding $U_{i,j}^{(L+1)}$ commutes with $P_{s_t}^{(L)} \otimes I_{[n] \setminus s_t}$, 
\begin{align*}
&\bigotimes_{(i,j) \in \text{pairs}[k]} U_{ij}^{(L+1)} (P_{s_t}^{(L)} \otimes I_{[n] \setminus s_t}) {\bigotimes_{(i,j) \in \text{pairs}[k]} U_{ij}^{(L+1)}}^{\dag}\\
=&\bigotimes_{\{i,j\} \cap s_t\neq \emptyset} U_{ij}^{(L+1)} (P_{s_t}^{(L)} \otimes I_{[n] \setminus s_t}) {\bigotimes_{\{i,j\} \cap s_t\neq \emptyset} U_{ij}^{(L+1)}}^{\dag} \bigotimes_{\{i,j\} \cap s_t= \emptyset} U_{ij}^{(L+1)}
{\bigotimes_{(i,j) \cap s_t= \emptyset} U_{ij}^{(L+1)}}^{\dag}\\
=&\bigotimes_{\{i,j\} \cap s_t\neq \emptyset} U_{ij}^{(L+1)} (P_{s_t}^{(L)} \otimes I_{[n] \setminus s_t}) {\bigotimes_{\{i,j\} \cap s_t\neq \emptyset} U_{ij}^{(L+1)}}^{\dag}\otimes I
\end{align*}
We use $UU^{\dag}=I$ in the last step.
After employing the updating of $s_t$ in the output, this is exactly
\begin{align*}
\bigcap_{t=1}^n (P_{s_t}^{(L+1)} \otimes I_{[n] \setminus s_t}) = \{\lambda\ket{\Phi_{L+1}} \mid \lambda \in \mathbb{C}\}.
\end{align*}
The property that the local projections $P_{s_1},\cdots,P_{s_m}$ pairwise commute follows from definition.
We have completed the proof for $k=L+1$.

\end{proof}
\section{Appendix: Comparison with Quantum Abstract Interpretation}
\label{Se:ComparisonWithQuantumAbstractInterpretation}


This section compares our method with the approach for quantum abstract interpretation presented by \citet{YP21}.
{
Both their method and ours
}
aim to address the scalability challenge posed by the exponential complexity of classical representations of quantum systems.

At a high level, our approach shares the same foundational idea used by \citeauthor{YP21}: use tuples of local projections to analyze quantum programs. The rationale for this choice is straightforward---it offers a pathway to potentially overcome the exponential barrier. However, the results we achieve differ significantly from those of \citeauthor{YP21}. In the following, we outline these differences across several dimensions: design, computational efficiency, completeness, and applications.


\subsection{{Fixed Domain} versus Automatically Chosen Domain}
\citeauthor{YP21} fix the abstract domain before
carrying out any reasoning steps, via a tuple of qubit sets $S = (s_1,\cdots,s_m)$.
The inflexible domain must be chosen manually.
As a consequence, the method of \citeauthor{YP21} only provides limited assertions, such as whether a state resides within a two-dimensional subspace spanned by two specific tensor product states.

Our work begins with a tuple of single qubits.
Our tuple's domain and content are updated at each unitary layer.
The domain of our tuple can be generated automatically according to the circuit.
By generalizing how tuples are defined and manipulated, our method can explore many more properties of the circuit's output state.

\subsection{Computational Efficiency}
In the method of \citeauthor{YP21}, each step computes the abstract state for a single one- or two-qubit unitary operation. This method ties the computational cost to the gate complexity of the circuit, which is simply the number of gates in the circuit. Additionally, during the verification, the abstract and concretization functions must calculate the intersection of projections, typically using the Gram–Schmidt process. This process is time-consuming and not very robust
{
to floating-point imprecision.
}

In our work, each unitary layer is processed in a single computational step.
This approach ties the computational cost to the time complexity of the circuit, with the depth representing the time required to execute the circuit.
Furthermore, our computation steps only involve multiplying a unitary by a local projection, making our approach highly efficient.

\subsection{Completeness}

For a quantum circuit, a tuple of local projections, and a program analyzer, there could be two different notions of completeness:
1) The tuple of local projections is complete for describing the circuit's output state. In other words, the circuit's output state is the only state that satisfies each local projection.
2) The analyzer will provide the correct answer about whether the circuit's output satisfies each local projection.

\citeauthor{YP21} do not prove completeness or related results.
1) The two-dimensional- subspace assertions that their method can check cannot provide amplitude information;
therefore, it is not complete in the first sense.
2)
    It is not clear whether their method can always prove that a two-dimensional-subspace assertion holds if the outcome satisfies the assertion.
    
Our approach provides both kinds of completeness results for shallow circuits.
1) It constructs a tuple of local projections that uniquely identifies the circuit's output state.
2) We provide an analyzer that is complete in the second sense for assertions as tuples of local projections.

\subsection{Applications}
The method of \citeauthor{YP21} is an abstraction framework. 
For a given quantum program, their method can generate a tuple of local projections such that the output state satisfies these projections.
This approach is satisfactory for showing that some properties are \emph{not} satisfied, rather than that properties \emph{are} satisfied.
It is not immediately applicable for efficient runtime analysis because the order of executing the local projections matters when they do not commute. 

Our approach can be used to prove shallow-circuit equivalence/inequivalence and run-time analysis thanks to its completeness. 
The local projections generated in our approach are always commuting,
{
which enables the kind of run-time analyses described in \Cref{sec:RunTimeAssertions} and \Cref{sec:NISQDeviceVerification}.
}

\end{document}